\newtheorem{theorem}{Theorem}
\newtheorem{corollary}{Corollary}
\newtheorem{definition}{Definition}
\newtheorem{proposition}{Proposition}
\newtheorem{remark}{Remark}
\definecolor{darkgreen}{rgb}{0.1,0.5,.5}
\definecolor{darkred}{rgb}{0.8,0,0}
\definecolor{teal}{rgb}{0.05,0.32,0.41}
\definecolor{darkblue}{rgb}{0,0.0,0.5}
\definecolor{blackgreen}{rgb}{0,0.4,0}
\definecolor{purple}{rgb}{0.5,0,0.3}
\definecolor{grey}{rgb}{0.7,0.5,0.5}
\definecolor{orange}{rgb}{0.6,0.4,0.1}
\begin{document}
%
\title{Relay-Aided MIMO Cellular Networks Using Opposite Directional Interference Alignment}
%
%
\author{Hoyoun Kim and Jong-Seon No
\thanks{H. Kim and J.-S. No are with the Department of Electrical and Computer
Engineering, INMC, Seoul National University, Seoul 08826, Korea. e-mail:
ferui@ccl.snu.ac.kr, jsno@snu.ac.kr.}
}

%
%
%
\markboth{Submission for...}{Regular Paper}



\maketitle

\begin{abstract}

In this paper, we propose an interference alignment (IA) scheme for the multiple-input multiple-output (MIMO) uplink cellular network with the help of a relay which operates in half-duplex mode. 
The proposed scheme only requires global channel state information (CSI) knowledge at the relay, with no transmitter beamforming and time extension at the user equipment (UE), which differs from conventional IA schemes for cellular networks.
We derive the feasibility condition of the proposed scheme for the general cellular network configuration and analyze the degrees-of-freedom (DoF) performance of the proposed IA scheme while providing a closed-form beamformer design at the relay. 
Extensions of the proposed scheme to downlink and full-duplex cellular networks are also proposed in this paper. 
The DoF performance of the proposed schemes is compared to that of a linear IA scheme for a cellular network with no time extension. 
It is also shown that advantages similar to those in the uplink case can be obtained for the downlink case through the duality of a relay-aided interfering multiple-access channel (IMAC) and an interfering broadcast channel (IBC).
Furthermore, the proposed scheme for a full-duplex cellular network is shown to have advantages identical to those of a number of proposed half-duplex cellular cases.
\end{abstract}

\begin{keywords}

Degrees-of-freedom (DoF), interference alignment (IA), interfering broadcast channel (IBC), interfering multiple-access channel (IMAC), multiple-input multiple-output (MIMO), opposite directional interference alignment (ODIA), relay.

\end{keywords}

\vspace{10pt}
\section{Introduction}

It is known that high degrees-of-freedom (DoF) can be achieved in a multi-user interference environment by means of interference alignment (IA). 
The optimal DoF of single-input single-output (SISO) time-varying interference channel has been derived through an asymptotic IA scheme \cite{cadambe08}. 
Research of IA in multiple-input multiple-output (MIMO) channels also has been done and the optimal DoF region for a multi-user MIMO interference channel has been revealed \cite{gou10}, showing that IA can also increase the DoF of MIMO communication systems. 

Unfortunately, conventional IA schemes cannot be directly applied to cellular networks because the channel state information (CSI) feedback to transmitter and time extension are required.
For example, a subspace IA scheme for a SISO uplink cellular network, finding that the optimal DoF of one can be achieved as the number of cellular users increases was suggested \cite{suh08}. 
However, the subspace IA scheme requires more time extensions as the number of users increases. 
Thus the transmission/reception delay of the information increases as the number of users in the network becomes large, which is not suitable for a cellular network.  
In general, user equipment (UE) is not practically able to handle/process beamforming by CSI feedback and communicating nodes require short transmission/reception delays. 
In order to reduce the CSI feedback and time extension requirements, relay-aided IA schemes have been proposed for wireless communication networks, where the relay handles the CSI instead of the transmitter, adding the potential to reduce the time extension requirement in the MIMO interference channel. 
An IA scheme with two time slots and without CSIT feedback for a MIMO interference channel was proposed \cite{relay13}, but the effective noise increases due to zero-forcing and global CSI knowledge at the receiver (CSIR) is required.
To overcome the abovementioned limitations on this scheme, an opposite directional interference alignment (ODIA) scheme with relay was introduced \cite{odia15}.
It was found that the ODIA can also achieve the optimal DoF in a symmetric MIMO interference channel.

Recently, researchers studied the achievable DoF for a cellular network where a base station (BS) is equipped with directional or omni-directional antennas when considering the cellular network topology \cite{cellular14} \cite{cellular14-2}.
In \cite{cellular15}, the optimal DoF and an achievable IA scheme were derived in a MIMO cellular network using earlier result in \cite{cadambe09}. 
However, this scheme also requires infinite time extensions to achieve promised DoF for a large network, which is not practical for a cellular network. Furthermore, a one-shot linear IA scheme was proposed based on subspace IA \cite{cellular16}, but finding and providing feedback with regard to the optimized variables to align the interferences may place a heavy load on the UEs. 

Several studies have focused on the potential role of relays in a cellular network with IA. 
An IA scheme for a two-cell downlink cellular network based on different half-duplex relaying schemes with the decode-and-forward (DF) protocol was proposed \cite{relay15}. 
However, it is known that the relay design becomes complex when the DF protocol is used in the relay \cite{relay08}. 
It was shown that in a quasi-static flat-fading environment, a MIMO cellular network can achieve higher DoF by using a full-duplex relay with finite time extension \cite{relay12}. 
In practice, relay echo due to the full-duplex operation at the relay can severely degrade the performance of the cellular network.

In this paper, in order to solve the problems of applying IA to MIMO cellular networks, we propose an interfering multiple-access channel (IMAC)-ODIA scheme for an uplink cellular network with a half-duplex amplify-and-forward (AF) relay, where only two time slots are needed regardless of the network size, to achieve interference alignment. 
It has the advantage of no time extension and the ability to be operated with various channel settings. 
Further, CSIT at the UEs are not required in the proposed scheme.
We evaluate the DoF performance of the proposed IMAC-ODIA scheme and show that for certain antenna configuration, the proposed scheme can achieve the optimal DoF for a MIMO cellular network.

Furthermore, we propose a relay-aided IA scheme for a downlink cellular network to resolve the global CSIR requirement in the conventional IA scheme which requires a heavy load on the UEs. 
Using the uplink-downlink duality from earlier work \cite{dual15}, we propose a modification of the IMAC-ODIA scheme for the downlink, which leads to interfering broadcast channel (IBC)-ODIA with features similar to those of the IMAC-ODIA scheme. 
Here, neither decorrelator construction nor global CSIR is required at the UEs.
We also extend our idea to full-duplex cellular network, referred to as full-duplex ODIA (FD-ODIA). 
It is important to handle interference to obtain full-duplexing gain and accordingly, we provide an IA beamformer design at each node and discuss its DoF performance.

This paper is organized as follows. In Section \ref{sec_preliminaries}, we present preliminary information concerning the properties of the tensor product and system model of the cellular network with a relay. In Section \ref{sec_IAcondition}, the IA condition and the IMAC-ODIA scheme for the uplink cellular network are proposed. 
An IBC-ODIA scheme for a downlink cellular network and its extension to a full-duplex cellular network are also proposed in Section \ref{sec_ibcodia} and Section \ref{sec_fdodia}, respectively. 
Section \ref{sec_discussion} includes the discussion, and finally the conclusion is given in Section \ref{sec_conclusion}.

\vspace{10pt}
\section{Preliminaries and System Model} \label{sec_preliminaries}

In this section, we introduce several properties of the tensor product of matrices \cite{tensor} and describe the cellular network model with an AF relay. 
First, we focus on an uplink symmetric cellular network modeled as a symmetric IMAC.

Throughout the paper, scalars are denoted in lowercase, vectors are written in boldface lowercase, and matrices are indicated by boldface capital letters. 
Several definitions pertaining to matrix \textbf{A} and vector \textbf{a} are given below.

- $a_{ij}$: $(i, j)$ component of matrix \textbf{A}

- $a_{i}$: $i$-th component of vector \textbf{a}

- $\textbf{null}^{\text{left}}_{\textbf{A}}$ and $\textbf{null}^{\text{right}}_{\textbf{A}}$: left and right null vectors of matrix \textbf{A}, respectively

- $\{\textbf{A}\}_{i}$: $i$-th column vector of matrix \textbf{A}

- $\{\textbf{A}\}_{i:j}$: submatrix of \textbf{A}, i.e., $[\{\textbf{A}\}_{i}, \{\textbf{A}\}_{i+1}, ..., \{\textbf{A}\}_{j}]$

- \text{rank}(\textbf{A}) or $r_\textbf{A}$: rank of matrix \textbf{A}

- ${(\cdot)}^T$, ${(\cdot)}^H$, ${(\cdot)}^{\text{left}}$, ${(\cdot)}^{\text{right}}$, and ${(\cdot)}^{\dagger}$: transpose, complex conjugated transpose, left inverse, 

\hspace{35pt} right inverse, and Moore-Penrose pseudo inverse, respectively 

- $\textbf{0}_{I\times J}$: $I\times J$ zero matrix and subscript can be omitted when the size of the zero matrix is 

\hspace{35pt} not important.

- $\textbf{I}_I$: $I\times I$ identity matrix

- $\textbf{I}_{I\times J}$: $I\times J$ rectangular identity matrix $=\setcounter{MaxMatrixCols}{1}
\begin{bmatrix}
\textbf{I}_J\\ \textbf{0}_{(I-J)\times J}
\end{bmatrix}$, $I\ge J$

- $\mathbb K$: real or complex field, $\mathbb R$ or $\mathbb C$

\subsection{Tensor Product and Kruskal Rank}
First, we briefly introduce the pre-defined tensor products of matrices, that is, the Kronecker and Khatri-Rao products. The Kronecker product of \textbf{A} and \textbf{B} is defined below.
\vspace{10pt}
\begin{definition} \label{def_kronecker}

$\textbf{A}\otimes \textbf{B}$ is the Kronecker product of \textbf{A} and \textbf{B} defined as

\setcounter{MaxMatrixCols}{3}
\begin{equation*}
\textbf{A}\otimes \textbf{B}=
\begin{bmatrix}
a_{11}\textbf{B}&a_{12}\textbf{B}&\cdots\\a_{21}\textbf{B}&a_{22}\textbf{B}&\cdots\\\vdots&\vdots&\ddots
\end{bmatrix}
.
\end{equation*}
\end{definition}
\vspace{10pt}

Let $\textbf{A}=[\textbf{A}_1\text{ }...\text{ }\textbf{A}_D]$ and $\textbf{B}=[\textbf{B}_1\text{ }...\text{ }\textbf{B}_D]$ be two partitioned matrices with an equal number of partitions. The Khatri-Rao product of \textbf{A} and \textbf{B} is then defined as the partition-wise Kronecker product as given below.
\vspace{10pt}
\begin{definition} \label{def_khatrirao}

$\textbf{A}\odot \textbf{B}$ is the Khatri-Rao product of partitioned matrices \textbf{A} and \textbf{B} defined as
\setcounter{MaxMatrixCols}{3}
\begin{equation*}
\textbf{A}\odot \textbf{B}=
\begin{bmatrix}
\textbf{A}_1\otimes \textbf{B}_1&...&\textbf{A}_D\otimes \textbf{B}_D
\end{bmatrix}
.
\end{equation*}
\end{definition}
\vspace{10pt}

Further, Kruskal rank and generalized Kruskal ranks are introduced as follows.
\vspace{10pt}
\begin{definition} \label{def_kruskal}

The Kruskal rank of matrix \textbf{A} denoted by $\text{rank}_k(\textbf{A})$ or $k_\textbf{A}$, is the maximal number \textit{r} such that any set of \textit{r} columns of \textbf{A} is linearly independent.

\end{definition}
\vspace{10pt}
\begin{definition} \label{def_gkruskal}

The generalized Kruskal rank of partitioned matrix $\textbf{A}=[\textbf{A}_1\text{ }...\text{ }\textbf{A}_D]$ denoted by $\text{rank}_{k'}(\textbf{A})$ or $k'_\textbf{A}$, is the maximal number \textit{r} such that any set of \textit{r} submatrices of \textbf{A} yields a set of linearly independent columns.

\end{definition}
\vspace{10pt}

We introduce a number of propositions for the generalized Kruskal rank from \cite{tensor}, which will be used in the next sections.
\vspace{10pt}
\begin{proposition}[Generalized Kruskal rank of an uniformly partitioned matrix \cite{tensor}]\label{prop_krank_uni}
{
Let $\textbf{A}=[\textbf{A}_1\text{ }...\text{ }\textbf{A}_D]$ be a matrix whose entries are drawn i.i.d. from a continuous
distribution and partitioned in $D$ submatrices with $\textbf{A}_r\in \mathbb K^{I\times L}$. 
Then, the generalized Kruskal rank of \textbf{A} is $\min\{{\lfloor {I\over L}\rfloor, D}\}$.
}
\end{proposition}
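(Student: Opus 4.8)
The plan is to write $m=\min\{\lfloor I/L\rfloor,\,D\}$ and prove the two bounds $k'_\textbf{A}\le m$ and $k'_\textbf{A}\ge m$ separately, the latter holding with probability one thanks to the continuous-distribution hypothesis. The generalized Kruskal rank is a property of selections of whole blocks $\textbf{A}_r\in\mathbb K^{I\times L}$, and since each selection of $r$ blocks concatenates to an $I\times(rL)$ submatrix of $\textbf{A}$, both bounds reduce to statements about the column rank of such submatrices living in the $I$-dimensional space $\mathbb K^{I}$.

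For the upper bound I would argue purely dimensionally. There are only $D$ blocks, so trivially $k'_\textbf{A}\le D$. When $\lfloor I/L\rfloor<D$, I would take an arbitrary set of $\lfloor I/L\rfloor+1$ blocks; their concatenation has $(\lfloor I/L\rfloor+1)L$ columns, and because $\lfloor I/L\rfloor>I/L-1$ this count exceeds $I$. More than $I$ vectors in $\mathbb K^{I}$ are necessarily linearly dependent, so the defining property of the generalized Kruskal rank already fails at $r=\lfloor I/L\rfloor+1$, giving $k'_\textbf{A}\le\lfloor I/L\rfloor$. Combining the two cases yields $k'_\textbf{A}\le m$.

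For the lower bound I need every selection of $m$ blocks to produce $mL$ linearly independent columns. First observe that $mL\le\lfloor I/L\rfloor L\le I$ in both cases ($m=\lfloor I/L\rfloor$ or $m=D\le\lfloor I/L\rfloor$), so full column rank is dimensionally admissible. Fixing one such selection, its concatenation $\textbf{M}$ is an $I\times(mL)$ submatrix whose entries are i.i.d.\ from the continuous distribution. The set of $I\times(mL)$ matrices of rank strictly less than $mL$ is the common zero locus of all $mL\times mL$ minors of $\textbf{M}$; this is a proper algebraic subvariety of $\mathbb K^{I\times(mL)}$, and exhibiting a single full-rank configuration (for instance a rectangular-identity pattern) shows that at least one minor polynomial is not identically zero, so the locus has Lebesgue measure zero. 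Hence $\textbf{M}$ has full column rank with probability one. Finally, since there are only $\binom{D}{m}$ ways to choose $m$ blocks and a finite union of null events is null, with probability one every $m$-block selection is full rank, so $k'_\textbf{A}\ge m$ and therefore $k'_\textbf{A}=m$.

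The hard part is the genericity step in the lower bound: one must rule out rank deficiency for all selections simultaneously, which rests on the fact that rank deficiency is confined to a measure-zero algebraic set. The crux is verifying that the governing minor polynomial is not the zero polynomial (equivalently, producing one admissible full-rank configuration); once that is established, absolute continuity of the distribution and the finiteness of the number of selections make the union-bound argument routine.
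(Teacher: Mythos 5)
Your proof is correct. There is nothing in the paper to compare it against: Proposition~\ref{prop_krank_uni} is imported from \cite{tensor} and stated without proof, so your argument fills a gap that the paper delegates to the reference. The route you take---the trivial bound $k'_{\textbf{A}}\le D$, the dimension count showing that any $\lfloor I/L\rfloor+1$ blocks supply more than $I$ columns of $\mathbb K^{I}$ and hence are dependent, and generic full column rank of each $I\times mL$ concatenation closed by a union bound over the $\binom{D}{m}$ selections---is the standard one and essentially what the cited reference does. One technical refinement: a ``continuous distribution'' need only be atom-free, not absolutely continuous with respect to Lebesgue measure, so the measure-zero claim for the rank-deficiency variety does not literally follow from it; since the entries are independent, you can replace that step by the elementary induction that a nonzero polynomial in independent non-atomic random variables vanishes with probability zero (condition on all variables but one: a nonzero univariate polynomial has finitely many roots, and the event that all of its coefficients vanish has probability zero by the inductive hypothesis). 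With that substitution, your lower bound holds under exactly the stated hypothesis, and the identity-pattern witness you exhibit is precisely what certifies that the relevant minor polynomial is not identically zero.
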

\vspace{10pt}
\begin{proposition} [Lemma 3.2 \cite{tensor}] \label{prop_krank}
{Consider the partitioned matrices $\textbf{A}=[\textbf{A}_1\text{ }...\text{ }\textbf{A}_D]$ with $\textbf{A}_r\in \mathbb K^{I \times L_r}$ and $\textbf{B}=[\textbf{B}_1\text{ }...\text{ }\textbf{B}_D]$ with $\textbf{B}_r\in \mathbb K^{J \times M_r}$, $1\le r\le D$. Then,

\begin{enumerate}

\item If $k'_\textbf{A}=0$ or $k'_\textbf{B}=0$, then $k'_\textbf{A$\odot$B}=0$.
\item If $k'_\textbf{A}\ge1$ and $k'_\textbf{B}\ge1$, then $k'_\textbf{A$\odot$B}\ge\min\{k'_\textbf{A}+k'_\textbf{B}-1, D\}$.

\end{enumerate}
}
\end{proposition}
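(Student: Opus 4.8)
The plan is to handle the two parts separately: part~1 follows at once from multiplicativity of rank under the Kronecker product, while part~2 is a rank/subspace argument modeled on the classical Kruskal bound. For part~1, note that the $r$-th block of $\textbf{A}\odot\textbf{B}$ is $\textbf{A}_r\otimes\textbf{B}_r$, and since $\text{rank}(\textbf{A}_r\otimes\textbf{B}_r)=\text{rank}(\textbf{A}_r)\,\text{rank}(\textbf{B}_r)$, this block has full column rank $L_rM_r$ if and only if both $\textbf{A}_r$ and $\textbf{B}_r$ have full column rank. Hence $k'_\textbf{A}=0$ (some $\textbf{A}_r$ is rank-deficient) or $k'_\textbf{B}=0$ forces the corresponding block of $\textbf{A}\odot\textbf{B}$ to be rank-deficient, and therefore $k'_{\textbf{A}\odot\textbf{B}}=0$.

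For part~2, set $s=\min\{k'_\textbf{A}+k'_\textbf{B}-1,D\}$, fix an arbitrary index set $S$ with $|S|=s$, and show the columns of $[\textbf{A}_r\otimes\textbf{B}_r]_{r\in S}$ are linearly independent. First I would posit a nontrivial dependence $\sum_{r\in S}(\textbf{A}_r\otimes\textbf{B}_r)\textbf{c}_r=\textbf{0}$ and reshape each $\textbf{c}_r=\text{vec}(\textbf{C}_r)$ with $\textbf{C}_r\in\mathbb{K}^{M_r\times L_r}$. Using the identity $(\textbf{A}_r\otimes\textbf{B}_r)\,\text{vec}(\textbf{C}_r)=\text{vec}(\textbf{B}_r\textbf{C}_r\textbf{A}_r^{T})$, this collapses to the matrix equation $\sum_{r\in T}\textbf{B}_r\textbf{C}_r\textbf{A}_r^{T}=\textbf{0}$, where $T=\{r\in S:\textbf{C}_r\neq\textbf{0}\}$ has size $t$ with $1\le t\le s\le k'_\textbf{A}+k'_\textbf{B}-1$. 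Taking a rank factorization $\textbf{C}_r=\textbf{G}_r\textbf{H}_r^{T}$ of rank $\rho_r\ge1$ and setting $\tilde{\textbf{A}}_r=\textbf{A}_r\textbf{H}_r$, $\tilde{\textbf{B}}_r=\textbf{B}_r\textbf{G}_r$, the equation becomes $\tilde{\textbf{B}}\tilde{\textbf{A}}^{T}=\textbf{0}$ for the stacked matrices $\tilde{\textbf{A}}=[\tilde{\textbf{A}}_r]_{r\in T}$ and $\tilde{\textbf{B}}=[\tilde{\textbf{B}}_r]_{r\in T}$, each carrying $\rho=\sum_{r\in T}\rho_r$ columns.

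The heart of the argument is then a Sylvester-type rank inequality: from $\tilde{\textbf{B}}\tilde{\textbf{A}}^{T}=\textbf{0}$ one gets $\text{rank}(\tilde{\textbf{A}})+\text{rank}(\tilde{\textbf{B}})\le\rho$, so it suffices to contradict this by proving the strict opposite $\text{rank}(\tilde{\textbf{A}})+\text{rank}(\tilde{\textbf{B}})>\rho$. This is where the generalized Kruskal hypotheses enter: since every block of $\textbf{A}$ has full column rank, $\text{Col}(\tilde{\textbf{A}}_r)$ is a $\rho_r$-dimensional subspace of $\text{Col}(\textbf{A}_r)$, and since any $k'_\textbf{A}$ blocks of $\textbf{A}$ have linearly independent columns their column spaces are in direct sum; restricting to the $\min\{t,k'_\textbf{A}\}$ blocks with the largest $\rho_r$ therefore yields $\text{rank}(\tilde{\textbf{A}})\ge\sum_{r\in T_A}\rho_r$, and symmetrically $\text{rank}(\tilde{\textbf{B}})\ge\sum_{r\in T_B}\rho_r$ for the top-$\min\{t,k'_\textbf{B}\}$ set $T_B$. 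I expect the main obstacle to be the ensuing combinatorial counting, which I would treat most carefully: after sorting the $\rho_r$ in nonincreasing order, I would show the surplus $\sum_{r\in T_A}\rho_r+\sum_{r\in T_B}\rho_r-\rho$ is strictly positive, being at least $k'_\textbf{A}+k'_\textbf{B}-t\ge1$ in the critical regime $t>\max\{k'_\textbf{A},k'_\textbf{B}\}$ (where the two top sets overlap and the bound $t\le k'_\textbf{A}+k'_\textbf{B}-1$ is exactly what is needed), and trivially positive when $t\le\max\{k'_\textbf{A},k'_\textbf{B}\}$. This contradiction establishes independence for every such $S$, and hence $k'_{\textbf{A}\odot\textbf{B}}\ge s$.
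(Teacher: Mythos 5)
Your proposal cannot be compared against an in-paper argument, because the paper does not prove this proposition at all: it is quoted verbatim as Lemma 3.2 of reference \cite{tensor} and used as a black box in the proof of Theorem \ref{the_exist_beamformer}. Judged on its own, your proof is correct and complete. Part 1 works because $k'_{\textbf{M}}\ge 1$ is exactly the statement that every block of a partitioned matrix $\textbf{M}$ has full column rank, and $\text{rank}(\textbf{A}_r\otimes\textbf{B}_r)=\text{rank}(\textbf{A}_r)\,\text{rank}(\textbf{B}_r)$ then forces a rank-deficient block of $\textbf{A}\odot\textbf{B}$ whenever some $\textbf{A}_r$ or $\textbf{B}_r$ is rank-deficient. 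Part 2 is the classical Kruskal-style argument lifted to blocks, and every step checks out: the identity $(\textbf{A}_r\otimes\textbf{B}_r)\text{vec}(\textbf{C}_r)=\text{vec}(\textbf{B}_r\textbf{C}_r\textbf{A}_r^T)$ is the paper's Proposition \ref{prop_kronec} transposed; the rank factorizations turn the dependence into $\tilde{\textbf{B}}\tilde{\textbf{A}}^T=\textbf{0}$ with $\rho=\sum_{r\in T}\rho_r$ columns on each side; Sylvester's inequality gives $\text{rank}(\tilde{\textbf{A}})+\text{rank}(\tilde{\textbf{B}})\le\rho$; and since $\textbf{H}_r$, $\textbf{G}_r$, $\textbf{A}_r$, $\textbf{B}_r$ all have full column rank, the column space of $\tilde{\textbf{A}}_r$ is a $\rho_r$-dimensional subspace of that of $\textbf{A}_r$, so the direct-sum property of any $\le k'_{\textbf{A}}$ blocks yields $\text{rank}(\tilde{\textbf{A}})\ge\sum_{r\in T_A}\rho_r$ (and symmetrically for $\tilde{\textbf{B}}$). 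Your counting is also right: if $t\le\max\{k'_{\textbf{A}},k'_{\textbf{B}}\}$ one of the two lower bounds already equals $\rho$ and the other is at least $1$; if $t>\max\{k'_{\textbf{A}},k'_{\textbf{B}}\}$, sorting $\rho_{(1)}\ge\dots\ge\rho_{(t)}\ge 1$ gives the surplus $\sum_{i=1}^{k'_{\textbf{A}}}\rho_{(i)}-\sum_{i=k'_{\textbf{B}}+1}^{t}\rho_{(i)}\ge k'_{\textbf{A}}+k'_{\textbf{B}}-t\ge 1$, precisely because $t\le k'_{\textbf{A}}+k'_{\textbf{B}}-1$ and the subtracted sum has at most $k'_{\textbf{A}}-1$ terms, each dominated by a distinct leading term. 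So the contradiction stands and $k'_{\textbf{A}\odot\textbf{B}}\ge\min\{k'_{\textbf{A}}+k'_{\textbf{B}}-1,D\}$. What your write-up buys relative to the paper's citation is self-containedness and a transparent accounting of exactly where each hypothesis ($k'_{\textbf{A}},k'_{\textbf{B}}\ge1$ and the bound on $t$) is used; this is presumably the same lineage of argument as in the cited reference, which generalizes the well-known column-wise Khatri-Rao $k$-rank bound.
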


\subsection{System Model: Cellular Network with a Single Relay}

In this paper, the information-theoretic quantity of interest is the DoF, where the DoF is defined as the number of successfully decodable data streams in the desired receiver that are transmitted by the corresponding user. 
The term ``successfully decodable'' means that the desired data streams are received in an interference-free space.

In a cellular network, the DoF per cell is of interest in general. 
It is defined as the interference-free dimension at the base station, regardless of whether it is the receiver or the transmitter. 
Once the DoF per cell is derived, it can be used to determine the number of users for data transmission per cell in a cellular network.

\begin{figure}
\centering
\includegraphics[scale=0.6]{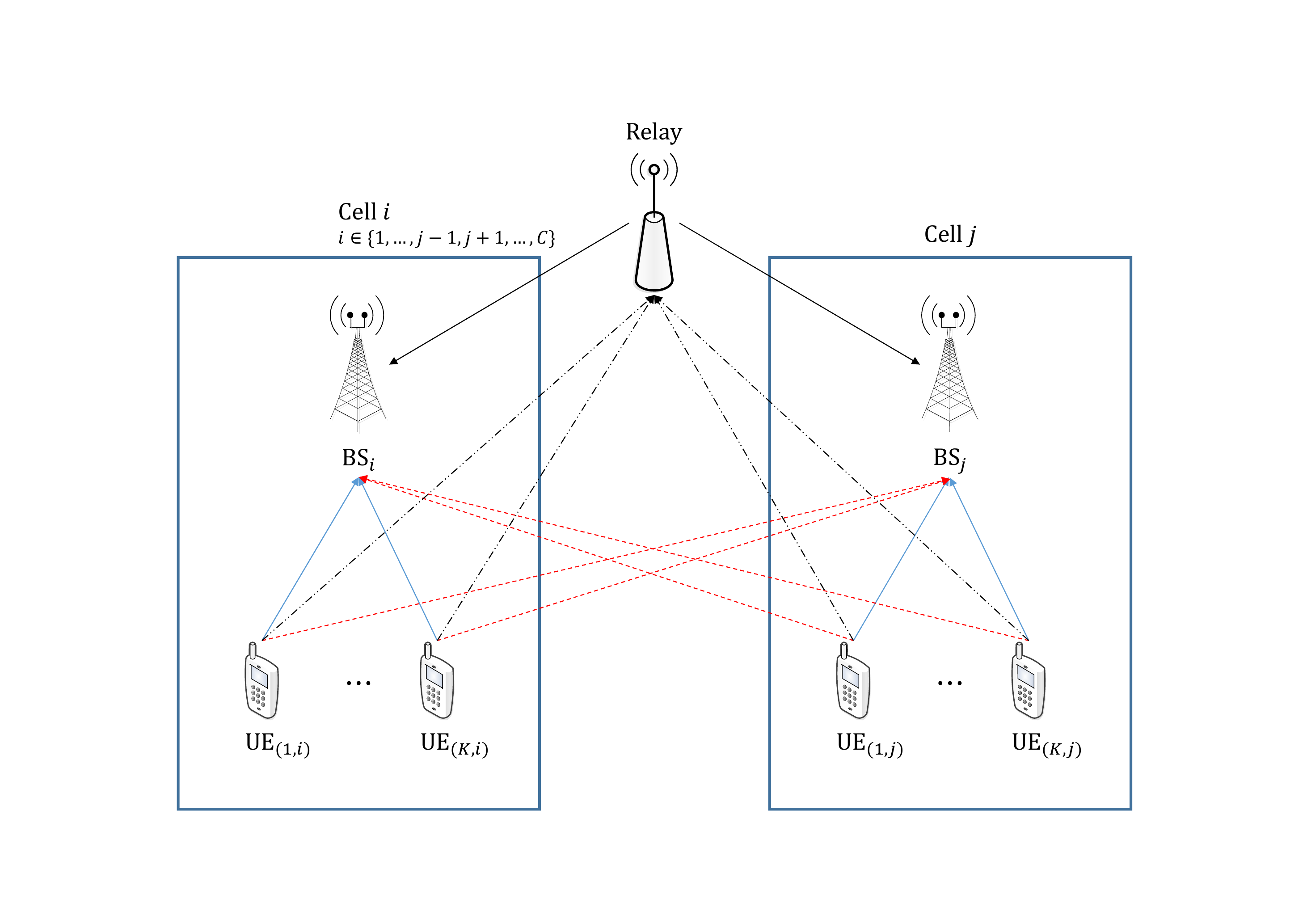}
\caption{System model: Fully connected cellular uplink network with a single relay.}
\label{fig_systemmodel_uplink}
\end{figure}

First, we focus on uplink cellular network, after which downlink cellular network will be considered. 
We consider a fully connected symmetric cellular network with $C$ cells and $K$ users in each cell, where `symmetric' means that each cell has an identical configuration.
Cell $j$ is served by a single base station denoted as $\text{BS}_j$, $j\in \{1, ..., C\}$. 
The user $(k,j)$ denoted as $\text{UE}_{(k,j)}$, $k\in \{1, ..., K\}$, is served by $\text{BS}_j$ as shown in Fig. \ref{fig_systemmodel_uplink}. 
It is assumed that each UE has $M$ transmit antennas, each BS has $N$ receive antennas, and the relay has $N_R$ antennas. 
The channel from $\text{UE}_{(k,i)}$ to $\text{BS}_j$ is denoted by the $N\times M$ matrix $\textbf{H}_{j,(k, i)}$, the channel from $\text{UE}_{(k,j)}$ to the relay is denoted by the $N_R\times M$ matrix $\textbf{H}^{UR}_{(k, j)}$, and the channel from the relay to $\text{BS}_j$  is denoted by the $N\times N_R$ matrix $\textbf{H}^{RB}_{j}$. 
In the proposed schemes, only the relay requires global CSI. 
In addition, each BS has its local CSI, but UEs do not require any CSI knowledge.

The channel is assumed to be Rayleigh fading and the entries of the channel matrices are independent identically distributed complex Gaussian random variables with zero-mean and unit-variance. 
Due to the average power constraint in cellular networks, the data vector $\textbf{x}_{(k, j)}$ transmitted from  $\text{UE}_{(k, j)}$ is normalized such that the average transmit power of each UE is limited by $P$, i.e., $\textbf{E}[||\textbf{x}_{(k, j)}||]\le P$, where $\textbf{E}[\cdot]$ and $||\cdot ||$ denote the expectation and norm functions, respectively. 

We consider a relay-aided half-duplexing communication scenario. 
Let $\textbf{y}_{j,1}$ and $\textbf{y}_{R}$ be the corresponding received signal vectors of $\text{BS}_j$ and the relay at the first time slot as

\begin{equation}\label{equ_rec_1st}
\textbf{y}_{j,1}=\sum_{k=1}^{K}{\textbf{H}_{j,(k, j)}\textbf{x}_{(k, j)}}+\sum_{i\ne j}^{C}{\sum_{k=1}^{K}{\textbf{H}_{j,(k, i)}\textbf{x}_{(k,i)}}}+\textbf{n}_{j,1}, \text{ } j\in \{1, ..., C\},
\vspace{10pt}
\end{equation}
\begin{equation}\label{equ_rec_rel}
\textbf{y}_{R}=\sum_{j=1}^{C}{\sum_{k=1}^{K}{\textbf{H}^{UR}_{(k, j)}}\textbf{x}_{(k, j)}}+\textbf{n}_{R}
,
\vspace{10pt}
\end{equation}
where $\textbf{n}_{j,1}\in \mathbb K^{N \times 1}$ and $\textbf{n}_{R}\in \mathbb K^{N_R \times 1}$ are zero-mean unit-variance circularly symmetric additive white complex Gaussian noise vectors at $\text{BS}_j$ and the relay, respectively. 
The received signal vector of $\text{BS}_j$ at the second time slot is given as

\begin{equation}\label{equ_rec_2nd_simple}
\textbf{y}_{j,2}=\textbf{H}^{RB}_{j}\textbf{T}\textbf{y}_R+\textbf{n}_{j,2}, \text{ } j\in \{1, ..., C\}
,
\vspace{10pt}
\end{equation}
where $\textbf{n}_{j,2}\in \mathbb K^{N \times 1}$ is also a zero-mean unit-variance circularly symmetric additive white complex Gaussian noise vector at $\text{BS}_j$ and $\textbf{T}\in \mathbb K^{N_R \times N_R}$ is the relay beamforming matrix. Relay beamformer is implemented based on the global CSI. 

Plugging (\ref{equ_rec_rel}) into (\ref{equ_rec_2nd_simple}) leads to

\begin{equation}\label{equ_rec_2nd}
\textbf{y}_{j,2}=\sum_{k=1}^{K}{\textbf{H}^{RB}_{j}\textbf{T}\textbf{H}^{UR}_{(k, j)}\textbf{x}_{(k,j)}}+\sum_{i\ne j}^{C}{\sum_{k=1}^{K}{\textbf{H}^{RB}_{j}\textbf{T}\textbf{H}^{UR}_{(k, i)}\textbf{x}_{(k,i)}}}+(\textbf{H}^{RB}_{j}\textbf{T}\textbf{n}_R+\textbf{n}_{j,2}), \text{ } j\in \{1, ..., C\}
.
\vspace{10pt}
\end{equation}

\section{IMAC-ODIA for Uplink Cellular Network} \label{sec_IAcondition}

First, we describe the key idea of the proposed IMAC-ODIA scheme for an uplink cellular network, after which we simplify (\ref{equ_rec_1st}) and (\ref{equ_rec_2nd}) using augmented channel matrices, which will be used for the proposed scheme. 
The IA conditions to be satisfied for the proposed scheme will also be described.
Finally, we present the details of relay beamformer which meet the IA conditions and analyze the achievable DoF of an uplink cellular network with the proposed IMAC-ODIA scheme.

\begin{figure}
\centering
\includegraphics[scale=0.8]{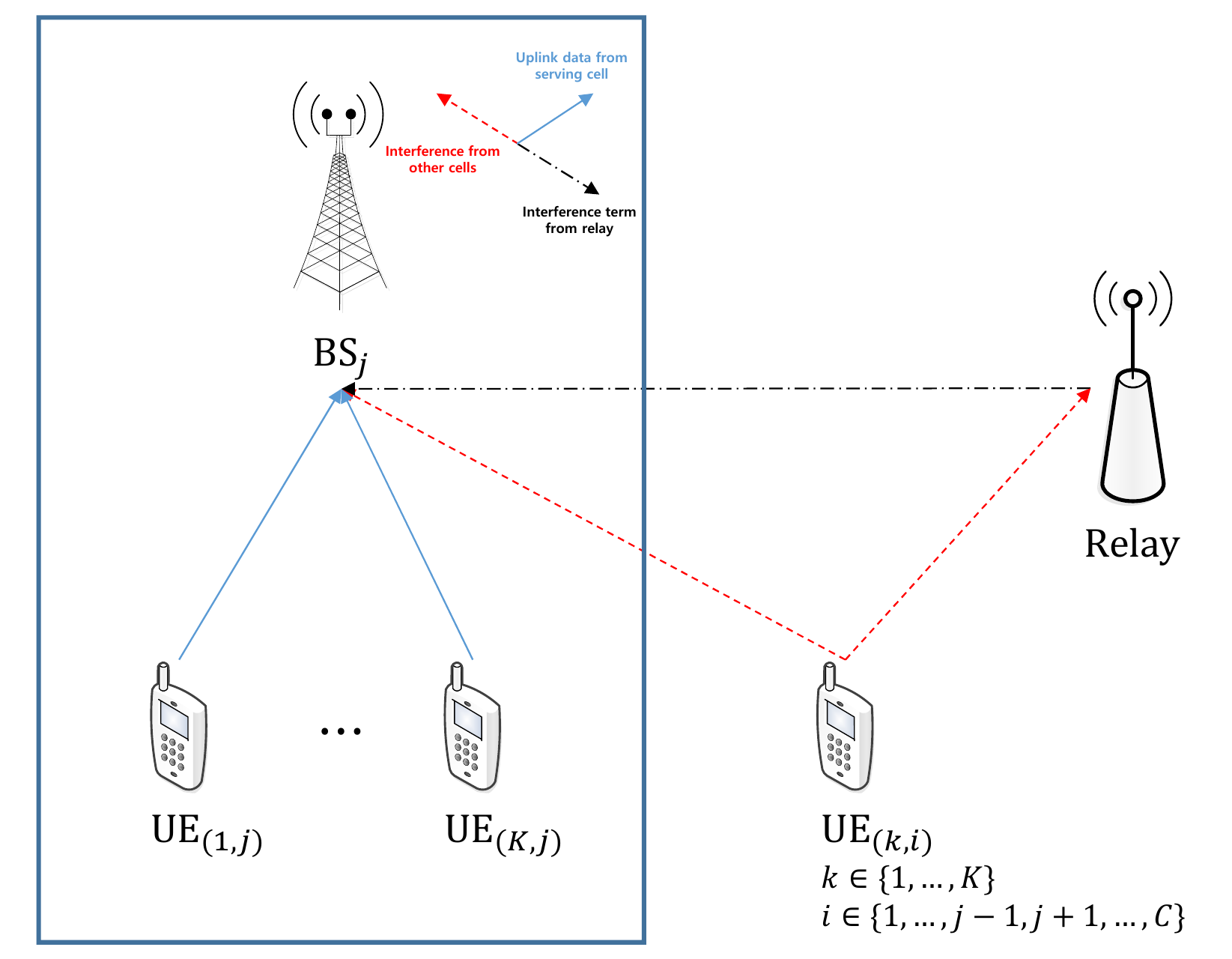}
\caption{Interference alignment by a relay in an uplink cellular network.}
\label{fig_IA_description}
\end{figure}

\subsection{IMAC-ODIA for a Symmetric Uplink Cellular Network}

The conventional IA is to align interferences in a subspace which is linearly independent of the desired signal space by designing a beamformer at the transmitter. 
Then, using a decorrelator on the receiver side, the aligned interferences can be perfectly cancelled. 
However, designing a beamformer at the transmitter requires CSIT and increases the computational complexity. 
In a cellular system, UEs are not likely to be capable of beamforming.

Thus, we propose an IMAC-ODIA scheme for an uplink cellular network, where the interference vector received from a direct channel in the first time slot and the interference vector received from the relay in the second time slot sum to zero. 
This is achieved by designing the beamforming matrix only at the relay as in Fig. \ref{fig_IA_description}. 
This implies that neither CSIT nor complex computing is required at the UEs.

Using augmentation of the matrices and vectors in (\ref{equ_rec_1st}) and (\ref{equ_rec_rel}), the abovementioned IA description can be simplified for the proposed IMAC-ODIA scheme. 
The augmented channel matrices for $\text{BS}_j$ are denoted as follows. 

- $\bar{\textbf{H}}_j$: interfering channel matrix from UEs of the other cells to $\text{BS}_j$

- $\bar{\textbf{H}}^{UR}_j$: interfering channel matrix from UEs of the other cells to the relay

- $\widehat{\textbf{H}}_j$: desired channel matrix from the UEs served by $\text{BS}_j$ to $\text{BS}_j$

- $\widehat{\textbf{H}}^{UR}_j$: desired channel matrix from the UEs served by $\text{BS}_j$ to the relay \\
These are defined as follows.
\begin{align}
\setcounter{MaxMatrixCols}{6}
\label{H_bar}
\bar{\textbf{H}}_j&=
\begin{bmatrix}
\textbf{H}_{j,(1,1)}&...&\textbf{H}_{j,(K,j-1)}&\textbf{H}_{j,(1,j+1)}&...&\textbf{H}_{j,(K,C)}
\end{bmatrix}\in \mathbb K^{N \times (C-1)KM}, \\
\nonumber \\
\setcounter{MaxMatrixCols}{6}
\bar{\textbf{H}}^{UR}_j&=
\begin{bmatrix}
\textbf{H}^{UR}_{(1, 1)}&...&\textbf{H}^{UR}_{(K, j-1)}&\textbf{H}^{UR}_{(1, j+1)}&...&\textbf{H}^{UR}_{(K, C)}
\end{bmatrix}\in \mathbb K^{N_R \times (C-1)KM}, \\
\nonumber \\
\setcounter{MaxMatrixCols}{3}
\widehat{\textbf{H}}_j&=
\begin{bmatrix}
\textbf{H}_{j,(1,j)}&...&\textbf{H}_{j,(K,j)}
\end{bmatrix}\in \mathbb K^{N \times KM}, \\
\nonumber \\
\setcounter{MaxMatrixCols}{3}
\label{H_UR_hat}
\widehat{\textbf{H}}^{UR}_j&=
\begin{bmatrix}
\textbf{H}^{UR}_{(1, j)}&...&\textbf{H}^{UR}_{(K, j)}
\end{bmatrix}\in \mathbb K^{N_R \times KM}.
\end{align}
Here, let $\bar{\textbf{x}}_j$ and $\hat{\textbf{x}}_j$ be the interfering data stream and the desired data stream received at $\text{BS}_j$, respectively.
They should also be converted into an augmented form as

\setcounter{MaxMatrixCols}{1}
\begin{equation*}
\bar{\textbf{x}}_j=
\begin{bmatrix}
\textbf{x}_{(1,1)}\\ \vdots\\ \textbf{x}_{(K,j-1)}\\ \textbf{x}_{(1,j+1)}\\ \vdots\\ \textbf{x}_{(K,C)}
\end{bmatrix}\in \mathbb K^{(C-1)KM \times 1}
, \hspace{70pt}
\hat{\textbf{x}}_j=
\begin{bmatrix}
\textbf{x}_{(1,j)}\\ \vdots\\ \textbf{x}_{(K,j)}
\end{bmatrix}\in \mathbb K^{KM \times 1}.
\vspace{10pt}
\end{equation*}
From the augmented matrices and vectors, (\ref{equ_rec_1st}) and (\ref{equ_rec_2nd}) can be rewritten as
\begin{align}
\label{equ_bs_j_1}
\textbf{y}_{j,1}&=\widehat{\textbf{H}}_j\hat{\textbf{x}}_j+\bar{\textbf{H}}_j\bar{\textbf{x}}_j+\textbf{n}_{j,1},\text{ } j\in \{1, ..., C\},
\\
\nonumber \\
\label{equ_bs_j_2}
\textbf{y}_{j,2}&=\textbf{H}^{RB}_{j}\textbf{T}(\widehat{\textbf{H}}^{UR}_j\hat{\textbf{x}}_j+\bar{\textbf{H}}^{UR}_j\bar{\textbf{x}}_j)+(\textbf{H}^{RB}_{j}\textbf{T}\textbf{n}_R+\textbf{n}_{j,2}),\text{ } j\in \{1, ..., C\}
.
\end{align}

Ignoring the noise term and adding (\ref{equ_bs_j_1}) and (\ref{equ_bs_j_2}), the total received signal vector at $\text{BS}_j$ becomes
\begin{equation}\label{equ_IA_1&2}
\textbf{y}_{j,1}+\textbf{y}_{j,2}=
(\widehat{\textbf{H}}_j+\textbf{H}^{RB}_{j}\textbf{T}\widehat{\textbf{H}}^{UR}_j)\hat{\textbf{x}}_j+(\bar{\textbf{H}}_j+\textbf{H}^{RB}_{j}\textbf{T}\bar{\textbf{H}}^{UR}_j)\bar{\textbf{x}}_j,\text{ } j\in \{1, ..., C\}
.
\vspace{10pt}
\end{equation}
In the proposed IMAC-ODIA scheme, the second term on the right-hand side in (\ref{equ_IA_1&2}) is the total received interference signal, which should be cancelled.
Thus, the IMAC-ODIA condition can be described as
\begin{equation}\label{equ_IA}
\bar{\textbf{H}}_j+\textbf{H}^{RB}_{j}\textbf{T}\bar{\textbf{H}}^{UR}_j=0,\text{ } j\in \{1, ..., C\}
.
\vspace{10pt}
\end{equation}
This indicates the necessity of only the local CSIR at $\text{BS}_j$ rather than the global CSIT and CSIR, though global CSI is required at the relay. 
At this point, our goal is to find the relay beamformer $\textbf{T}$ which meets the requirements in (\ref{equ_IA}). 
The design of the relay beamformer $\textbf{T}$ and its existence condition will be given in the next subsection.

\subsection{Existence of a Relay Beamformer and Its Design}\label{sub_ExtRel}

First, we introduce the important properties of the Kronecker product as in the following propositions for the main theorem in this subsection.
\vspace{10pt}

\begin{proposition} [Representation of the matrix by the Kronecker product] \label{prop_kronec}
{
Consider matrices \textbf{A}, \textbf{B}, \textbf{C}, and \textbf{X} and the matrix equation
\begin{equation} \label{equ_kron_example}
\textbf{A}\textbf{X}\textbf{B}=\textbf{C}
.
\end{equation}
Here, (\ref{equ_kron_example}) can be transformed by vectorization into
\begin{equation*} \label{equ_kron_vec}
\text{vec}(\textbf{A}\textbf{X}\textbf{B})=(\textbf{B}^T\otimes \textbf{A})\text{vec}(\textbf{X}) =\text{vec}(\textbf{C})
,
\end{equation*}
where \text{vec}(\textbf{X}) denotes the vectorization of matrix \textbf{X} by stacking the columns of \textbf{X} into a single column vector. 
}
\vspace{10pt}
\end{proposition}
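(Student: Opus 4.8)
The plan is to prove the single substantive identity, namely $\text{vec}(\textbf{A}\textbf{X}\textbf{B})=(\textbf{B}^T\otimes \textbf{A})\text{vec}(\textbf{X})$; the two outer equalities in the statement are then immediate, since $\text{vec}(\textbf{A}\textbf{X}\textbf{B})=\text{vec}(\textbf{C})$ is just the hypothesis $\textbf{C}=\textbf{A}\textbf{X}\textbf{B}$ with vec applied to both sides. I would prove the Kronecker identity by comparing the two sides one column-block at a time, since the right-hand side has a natural block structure inherited from Definition \ref{def_kronecker} and $\text{vec}(\textbf{X})$ is by construction a stack of the columns $\{\textbf{X}\}_m$ of $\textbf{X}$.

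First I would fix an index $\ell$ and expand the $\ell$-th column of the triple product. Because post-multiplication by $\textbf{B}$ forms column combinations, the $\ell$-th column of $\textbf{A}\textbf{X}\textbf{B}$ is $\textbf{A}\textbf{X}\{\textbf{B}\}_\ell$. Writing the entries of $\{\textbf{B}\}_\ell$ as $b_{m\ell}$ and using linearity gives $\textbf{A}\textbf{X}\{\textbf{B}\}_\ell=\textbf{A}\sum_m b_{m\ell}\{\textbf{X}\}_m=\sum_m b_{m\ell}\,\textbf{A}\{\textbf{X}\}_m$. This is the target expression for the $\ell$-th block of $\text{vec}(\textbf{A}\textbf{X}\textbf{B})$.

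Next I would compute the $\ell$-th block of $(\textbf{B}^T\otimes \textbf{A})\text{vec}(\textbf{X})$ directly from the block form in Definition \ref{def_kronecker}. The $(\ell,m)$ block of $\textbf{B}^T\otimes \textbf{A}$ is $(\textbf{B}^T)_{\ell m}\textbf{A}=b_{m\ell}\textbf{A}$, and the $m$-th block of $\text{vec}(\textbf{X})$ is exactly the column $\{\textbf{X}\}_m$, so multiplying blockwise yields $\sum_m b_{m\ell}\,\textbf{A}\{\textbf{X}\}_m$, which matches the previous paragraph. Since assembling these column-blocks over all $\ell$ reproduces $\text{vec}(\textbf{A}\textbf{X}\textbf{B})$, the identity follows. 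I expect no genuine obstacle here, as this is the standard vec--Kronecker identity; the only care required is bookkeeping, namely matching the transpose in $\textbf{B}^T$ against the index swap $b_{m\ell}$ produced by right-multiplication and checking that the block partition of $\textbf{B}^T\otimes \textbf{A}$ is conformable with the column partition of $\text{vec}(\textbf{X})$, which it is whenever the product $\textbf{A}\textbf{X}\textbf{B}$ is itself well defined.
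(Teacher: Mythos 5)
Your proof is correct. Note that the paper itself offers no proof of this proposition: it is stated as a known preliminary (the standard vec--Kronecker identity) and used as a black box in the proof of Theorem \ref{the_exist_beamformer}. Your column-block verification is the standard elementary derivation and it is complete: the expansion $\textbf{A}\textbf{X}\{\textbf{B}\}_\ell=\sum_m b_{m\ell}\,\textbf{A}\{\textbf{X}\}_m$ of the $\ell$-th column of the product, the identification of the $(\ell,m)$ block of $\textbf{B}^T\otimes\textbf{A}$ as $b_{m\ell}\textbf{A}$ via Definition \ref{def_kronecker}, and the observation that the $m$-th block of $\text{vec}(\textbf{X})$ is $\{\textbf{X}\}_m$ together give exactly the blockwise match needed, and your closing remarks on conformability and the transpose/index-swap bookkeeping are the only delicate points and are handled correctly. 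In effect, you supply the justification the paper omits, which is a reasonable thing to do; nothing is missing.
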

\begin{proposition} [Transpose of the Kronecker product] \label{prop_kronec_trans}
{
For matrices \textbf{A} and \textbf{B}, the following property holds, that is,
\begin{equation*} \label{equ_kron_trans}
(\textbf{A}\otimes \textbf{B})^T=\textbf{A}^T\otimes \textbf{B}^T
.
\end{equation*}
}
\end{proposition}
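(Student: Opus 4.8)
The plan is to prove the identity by a direct comparison of entries, relying only on the block definition of the Kronecker product in Definition \ref{def_kronecker}. First I would fix notation: let $\textbf{A}$ be an $m\times n$ matrix with entries $a_{ij}$ and let $\textbf{B}$ be a $p\times q$ matrix with entries $b_{kl}$, so that $\textbf{A}\otimes \textbf{B}$ is of size $mp\times nq$. By Definition \ref{def_kronecker}, its entry in row $(i-1)p+k$ and column $(j-1)q+l$ equals $a_{ij}b_{kl}$, where the indices range over $1\le i\le m$, $1\le j\le n$, $1\le k\le p$, and $1\le l\le q$. This index parametrization is the crux: every row of $\textbf{A}\otimes \textbf{B}$ corresponds to a unique pair $(i,k)$ and every column to a unique pair $(j,l)$.

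Next I would apply the transpose. Transposing $\textbf{A}\otimes \textbf{B}$ sends the entry at row $(i-1)p+k$, column $(j-1)q+l$ to row $(j-1)q+l$, column $(i-1)p+k$, so the $((j-1)q+l,(i-1)p+k)$ entry of $(\textbf{A}\otimes \textbf{B})^T$ is exactly $a_{ij}b_{kl}$. On the other side, $\textbf{A}^T$ has entries $(\textbf{A}^T)_{ji}=a_{ij}$ and $\textbf{B}^T$ has entries $(\textbf{B}^T)_{lk}=b_{kl}$, and $\textbf{A}^T\otimes \textbf{B}^T$ has the same overall size $nq\times mp$ as $(\textbf{A}\otimes \textbf{B})^T$. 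Applying Definition \ref{def_kronecker} once more, the entry of $\textbf{A}^T\otimes \textbf{B}^T$ in row $(j-1)q+l$, column $(i-1)p+k$ equals $(\textbf{A}^T)_{ji}(\textbf{B}^T)_{lk}=a_{ij}b_{kl}$. Since the two matrices agree in every entry and share the same dimensions, the claimed identity $(\textbf{A}\otimes \textbf{B})^T=\textbf{A}^T\otimes \textbf{B}^T$ follows.

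Equivalently, and perhaps more transparently, I would argue block-wise: the $(i,j)$ block of $\textbf{A}\otimes \textbf{B}$ is $a_{ij}\textbf{B}$, and transposing a block matrix both swaps the block positions and transposes each block, so the $(i,j)$ block of $(\textbf{A}\otimes \textbf{B})^T$ is $(a_{ji}\textbf{B})^T=a_{ji}\textbf{B}^T$; this coincides with the $(i,j)$ block $(\textbf{A}^T)_{ij}\textbf{B}^T=a_{ji}\textbf{B}^T$ of $\textbf{A}^T\otimes \textbf{B}^T$. The only real obstacle here is bookkeeping: one must keep the row/column index pairs and the block sizes mutually consistent throughout, since a single off-by-one error or a mis-transposed index pair would destroy the matching. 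No deeper structural property of the Kronecker product is required, which is why the block viewpoint is preferable for the write-up.
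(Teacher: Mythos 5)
Your proof is correct. There is, however, nothing in the paper to compare it against: the paper states Proposition \ref{prop_kronec_trans} without proof, treating it as a standard, well-known identity for the Kronecker product (background material alongside Definition \ref{def_kronecker} and the vectorization identity in Proposition \ref{prop_kronec}), and it is simply invoked later in the proof of Theorem \ref{the_exist_beamformer} to rewrite $\textbf{H}^T$ as the Khatri-Rao product $\bar{\textbf{H}}^{UR}\odot \textbf{H}^{RB}$. Your entry-wise verification therefore supplies the elementary argument the paper omits, and it is sound: the parametrization of the $((i-1)p+k,(j-1)q+l)$ entry as $a_{ij}b_{kl}$, the effect of transposition on that index pair, and the matching computation for $\textbf{A}^T\otimes \textbf{B}^T$ are all handled correctly, including the dimension check. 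Your second, block-wise pass is the same bookkeeping in more convenient notation; the one hypothesis it silently uses is that all blocks $a_{ij}\textbf{B}$ have identical size $p\times q$, which is exactly what legitimizes the rule ``transpose of a block matrix $=$ swap block positions and transpose each block,'' and which holds automatically for the Kronecker product. Either pass alone would suffice as a complete proof.
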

Then, the following theorem provides the lower bound of the required number of antennas at the relay to implement IMAC-ODIA for a symmetric uplink cellular network.

\vspace{10pt}
\begin{theorem}[Required number of antennas at the relay for beamforming]\label{the_exist_beamformer}
The relay beamformer $\textbf{T}$ satisfying condition (\ref{equ_IA}) exists if $N_R\ge \max\{{(C-1)KM,CN}\}$
.
\end{theorem}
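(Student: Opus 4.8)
The plan is to prove the statement constructively, exhibiting an explicit $\textbf{T}$ that satisfies all $C$ equations in (\ref{equ_IA}) at once whenever both budgets $N_R\ge(C-1)KM$ and $N_R\ge CN$ hold (the $\max$ enforces both). The guiding observation is that each bound buys a different one-sided inverse. Since the channel entries are i.i.d.\ from a continuous distribution, $N_R\ge(C-1)KM$ makes every tall interfering matrix $\bar{\textbf{H}}^{UR}_j\in\mathbb K^{N_R\times(C-1)KM}$ of full column rank almost surely, so a left inverse $(\bar{\textbf{H}}^{UR}_j)^{\text{left}}$ exists; and $N_R\ge CN$ makes the vertically stacked relay-to-BS matrix $\textbf{H}^{RB}:=[(\textbf{H}^{RB}_1)^T\ \cdots\ (\textbf{H}^{RB}_C)^T]^T\in\mathbb K^{CN\times N_R}$ of full row rank, so a right inverse $(\textbf{H}^{RB})^{\text{right}}$ exists. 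Note these two bounds multiply to $N_R^2\ge C(C-1)KMN$, which is exactly the condition that the vectorized system $\big((\bar{\textbf{H}}^{UR}_j)^T\otimes\textbf{H}^{RB}_j\big)\,\text{vec}(\textbf{T})=-\text{vec}(\bar{\textbf{H}}_j)$ obtained from Proposition \ref{prop_kronec} has at least as many unknowns as equations; this is reassuring but is not what actually drives the argument.

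First I would absorb the right factor cell by cell. For each $j$ set $\textbf{E}_j:=-\bar{\textbf{H}}_j(\bar{\textbf{H}}^{UR}_j)^{\text{left}}\in\mathbb K^{N\times N_R}$; because $(\bar{\textbf{H}}^{UR}_j)^{\text{left}}\bar{\textbf{H}}^{UR}_j=\textbf{I}$, this block meets the projected requirement $\textbf{E}_j\bar{\textbf{H}}^{UR}_j=-\bar{\textbf{H}}_j$. The remaining task is to realize all $C$ of these blocks through a single relay matrix, i.e.\ to find $\textbf{T}$ with $\textbf{H}^{RB}_j\textbf{T}=\textbf{E}_j$ for every $j$ simultaneously. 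Stacking these $C$ identities vertically gives the single equation $\textbf{H}^{RB}\textbf{T}=\textbf{E}$, where $\textbf{E}:=[(\textbf{E}_1)^T\ \cdots\ (\textbf{E}_C)^T]^T\in\mathbb K^{CN\times N_R}$, and I would simply take $\textbf{T}:=(\textbf{H}^{RB})^{\text{right}}\textbf{E}\in\mathbb K^{N_R\times N_R}$, whose dimensions are exactly those required of a relay beamformer.

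Verification uses the block structure of the right inverse. From $\textbf{H}^{RB}(\textbf{H}^{RB})^{\text{right}}=\textbf{I}_{CN}$, reading off the $j$-th block row gives $\textbf{H}^{RB}_j(\textbf{H}^{RB})^{\text{right}}=[\,\textbf{0}\ \cdots\ \textbf{I}_N\ \cdots\ \textbf{0}\,]$, the selector that picks the $j$-th block; hence $\textbf{H}^{RB}_j\textbf{T}=\textbf{E}_j$ and therefore $\textbf{H}^{RB}_j\textbf{T}\bar{\textbf{H}}^{UR}_j=\textbf{E}_j\bar{\textbf{H}}^{UR}_j=-\bar{\textbf{H}}_j$, which is precisely (\ref{equ_IA}) for each $j$. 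The main obstacle is exactly the coupling across cells: the $C$ matrix equations share the one unknown $\textbf{T}$, and the individually natural choices $\textbf{E}_j$ must be made compatible with a common $\textbf{T}$. This is resolved cleanly by the right inverse of the \emph{stacked} $\textbf{H}^{RB}$, which decouples the cells through its block-identity action, so that the per-cell left-inverse constructions never interfere. The only point needing care is the genericity of the two full-rank claims, which holds almost surely for the assumed i.i.d.\ continuous channel entries; I would state the result for generic channels accordingly.
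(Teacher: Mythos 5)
Your proof is correct, but it takes a genuinely different route from the paper's. The paper vectorizes (\ref{equ_IA}) via Proposition \ref{prop_kronec}, stacks the $C$ equations into a single system $\textbf{H}\,\text{vec}(\textbf{T})=\textbf{h}$, and proves existence by showing $\textbf{H}$ has full row rank: it identifies $\textbf{H}^T$ as the Khatri-Rao product $\bar{\textbf{H}}^{UR}\odot\textbf{H}^{RB}$ and invokes the generalized Kruskal rank bounds of Propositions \ref{prop_krank_uni} and \ref{prop_krank}, which is exactly where the two thresholds enter ($N_R\ge(C-1)KM$ forces $k'_{\bar{\textbf{H}}^{UR}}=1$, $N_R\ge CN$ forces $k'_{\textbf{H}^{RB}}=C$). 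You never vectorize: you factor the coupled conditions into a per-cell left-inverse step, $\textbf{E}_j=-\bar{\textbf{H}}_j(\bar{\textbf{H}}^{UR}_j)^{\text{left}}$ (needing $N_R\ge(C-1)KM$), followed by one right-inverse step on the stacked relay-to-BS matrix, $\textbf{T}=(\textbf{H}^{RB})^{\text{right}}\textbf{E}$ (needing $N_R\ge CN$), with the block-identity action of the stacked right inverse resolving the cross-cell coupling; both full-rank claims hold almost surely, which matches the probability-one sense of the paper's own rank statements. Your route is more elementary (no tensor/Kruskal machinery), fully constructive, and computationally lighter --- it inverts Gram matrices of sizes $(C-1)KM$ and $CN$ rather than the $C(C-1)KMN\times C(C-1)KMN$ Gram matrix implicit in $\textbf{H}^{\text{right}}$ in (\ref{equ_relay_beam}) --- and it makes explicit the identity $\textbf{H}^{RB}_{j}\textbf{T}=-\bar{\textbf{H}}_j(\bar{\textbf{H}}^{UR}_j)^{\text{left}}+\textbf{null}^{\text{left}}_{\bar{\textbf{H}}^{UR}_j}$ that the paper only exploits later, in the proof of Theorem \ref{the_dof_cell}. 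What the paper's route buys in exchange is a description of the entire affine solution set (hence the $\textbf{null}^{\text{right}}_{\textbf{H}}$ term in (\ref{equ_relay_beam})) and a single vectorize-and-stack template that is reused verbatim for the asymmetric, IBC, and FD variants --- though your two-step factorization also extends to those settings by stacking the corresponding relay-output channels.
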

\vspace{10pt}

\begin{proof}
From \textit{Proposition \ref{prop_kronec}}, (\ref{equ_IA}) can be transformed into

\begin{equation} \label{equ_equi_IA}
\{(\bar{\textbf{H}}^{UR}_j)^T\otimes \textbf{H}^{RB}_{j}\} \text{vec}(\textbf{T})=\text{vec}(-\bar{\textbf{H}}_j),\text{ } j\in \{1, ..., C\}
.
\vspace{10pt}
\end{equation}
The $C$ equations in (\ref{equ_equi_IA}) can be rewritten as a single equation 

\setcounter{MaxMatrixCols}{1}
\begin{equation} \label{equ_IA_mat}
\begin{bmatrix}
(\bar{\textbf{H}}^{UR}_1)^T\otimes \textbf{H}^{RB}_{1}\\ (\bar{\textbf{H}}^{UR}_2)^T\otimes \textbf{H}^{RB}_{2}\\ \vdots\\ (\bar{\textbf{H}}^{UR}_C)^T\otimes \textbf{H}^{RB}_{C}
\end{bmatrix}
\text{vec}(\textbf{T})=
\begin{bmatrix}
\text{vec}(-\bar{\textbf{H}}_1)\\ \text{vec}(-\bar{\textbf{H}}_2)\\ \vdots\\ \text{vec}(-\bar{\textbf{H}}_C)
\end{bmatrix}
.
\vspace{10pt}
\end{equation}

Let $\textbf{H}=\begin{bmatrix}
(\bar{\textbf{H}}^{UR}_1)^T\otimes \textbf{H}^{RB}_{1}\\ (\bar{\textbf{H}}^{UR}_2)^T\otimes \textbf{H}^{RB}_{2}\\ \vdots\\ (\bar{\textbf{H}}^{UR}_C)^T\otimes \textbf{H}^{RB}_{C}
\end{bmatrix}$ and $\textbf{h}=\begin{bmatrix}
\text{vec}(-\bar{\textbf{H}}_1)\\ \text{vec}(-\bar{\textbf{H}}_2)\\ \vdots\\ \text{vec}(-\bar{\textbf{H}}_C)
\end{bmatrix}
$. 
Then, (\ref{equ_IA_mat}) can be represented as 

\vspace{10pt}
\begin{equation} \label{equ_HTh}
 \textbf{H}\times \text{vec}(\textbf{T})=\textbf{h}
.
\vspace{10pt}
\end{equation}

Our goal is to find \text{vec}(\textbf{T}) satisfying (\ref{equ_HTh}). First, we have to show its existence.
Solutions of (\ref{equ_HTh}) exist if the following condition is satisfied as 

\setcounter{MaxMatrixCols}{2}
\begin{equation}\label{equ_rank}
\text{rank}
(\begin{bmatrix}
\textbf{H}&\textbf{h}
\end{bmatrix})
=
\text{rank}(\textbf{H})
.
\vspace{10pt}
\end{equation}
Since all entries of \textbf{h} are drawn i.i.d. from a continuous distribution, (\ref{equ_rank}) can be satisfied if and only if \textbf{H} has full row rank, which means that $\textbf{H}^T$ has full column rank.
From \textit{Proposition \ref{prop_kronec_trans}}, $\textbf{H}^T$ can be rewritten as

\setcounter{MaxMatrixCols}{4}
\begin{equation}\label{khatri_example}
\textbf{H}^T=
\begin{bmatrix}
\bar{\textbf{H}}^{UR}_1\otimes (\textbf{H}^{RB}_{1})^T &\bar{\textbf{H}}^{UR}_2\otimes (\textbf{H}^{RB}_{2})^T &... &\bar{\textbf{H}}^{UR}_C\otimes (\textbf{H}^{RB}_{C})^T
\end{bmatrix}
.
\vspace{10pt}
\end{equation}
In fact, $\textbf{H}^T$ in (\ref{khatri_example}) is the Khatri-Rao product of the two matrices $\bar{\textbf{H}}^{UR}$ and $\textbf{H}^{RB}$, where

\setcounter{MaxMatrixCols}{4}
\begin{equation*}\label{khatri_example2}
\bar{\textbf{H}}^{UR}=
\begin{bmatrix}
\bar{\textbf{H}}^{UR}_1& \bar{\textbf{H}}^{UR}_2& ...&\bar{\textbf{H}}^{UR}_C
\end{bmatrix}
, \hspace{10pt}
\textbf{H}^{RB}=
\begin{bmatrix}
(\textbf{H}^{RB}_{1})^T& (\textbf{H}^{RB}_{2})^T& ...& (\textbf{H}^{RB}_{C})^T
\end{bmatrix}
.
\vspace{10pt}
\end{equation*}
Note that $\bar{\textbf{H}}^{UR}\in \mathbb K^{N_R \times C(C-1)KM}$ and $\textbf{H}^{RB}\in \mathbb K^{N_R \times CN}$.

\begin{figure}
\centering
\includegraphics[scale=0.7]{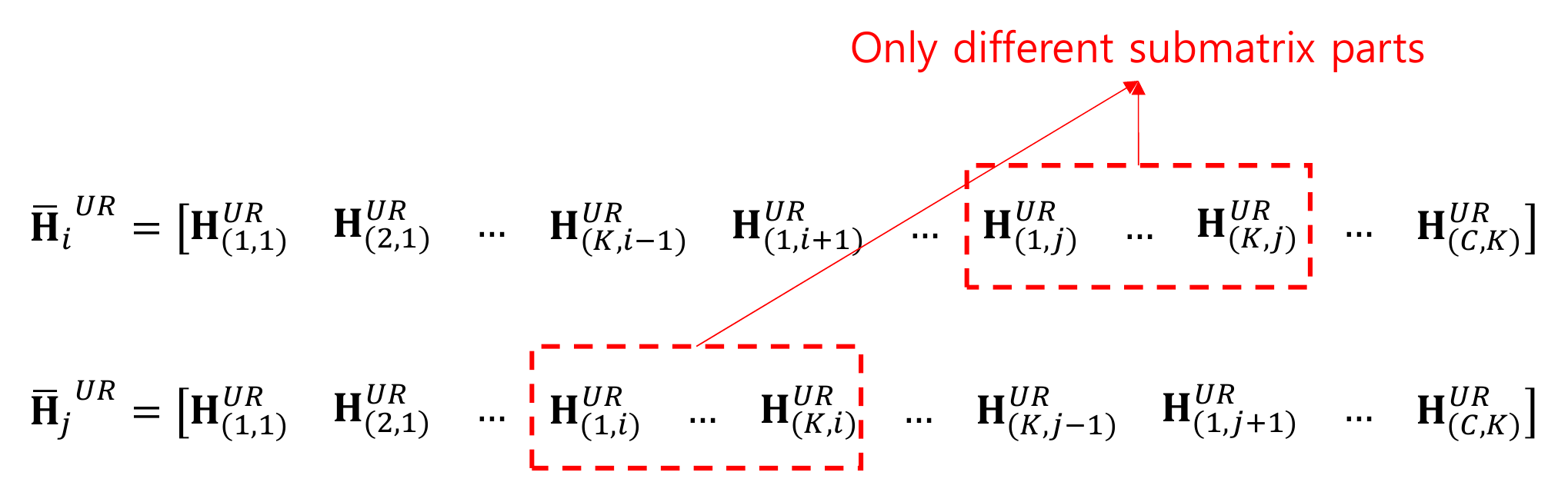}
\caption{Matrix structure of $\bar{\textbf{H}}^{UR}_{i}$ and $\bar{\textbf{H}}^{UR}_{j}$ for $i\neq j$.}
\label{fig_matrix_URbar}
\end{figure}

From Fig. \ref{fig_matrix_URbar}, there are precisely $(C-2)K$ identical $N_R\times M$ submatrices for any pair of matrices in $\{ \bar{\textbf{H}}^{UR}_1, \bar{\textbf{H}}^{UR}_2, ..., \bar{\textbf{H}}^{UR}_C\}$ with probability one. 
This implies that the generalized Kruskal rank of $\bar{\textbf{H}}^{UR}$ denoted by $k'_{\bar{\textbf{H}}^{UR}}$ has a maximum value of one.  
From \textit{Proposition \ref{prop_krank_uni}}, $k'_{\bar{\textbf{H}}^{UR}}$ can have two different values 
{
\begin{equation*}
k'_{\bar{\textbf{H}}^{UR}}=\min{\{ \lfloor {N_R\over (C-1)KM}\rfloor, C, 1\}}=
\begin{cases}
1, \;{\rm if}\; N_R\ge (C-1)KM \\
0, \;{\rm if}\; N_R< (C-1)KM
.
\end{cases}
\vspace{10pt}
\end{equation*}
}
Furthermore, because $\textbf{H}^{RB}$ has submatrices, whose entries are drawn i.i.d. from a continuous distribution, $k'_{\textbf{H}^{RB}}$ is also given as

{
\begin{equation*}
k'_{\textbf{H}^{RB}}=\min{\{ \lfloor {N_R\over N}\rfloor, C\}}=
\begin{cases}
C,\hspace{17pt} \;{\rm if}\; N_R\ge CN \\
\lfloor {N_R\over N}\rfloor, \;{\rm if}\; N_R< CN
.
\end{cases}
\vspace{10pt}
\end{equation*}
}

From \textit{Proposition \ref{prop_krank}}, we can find the lower bound on the generalized Kruskal rank of $\textbf{H}^T=\bar{\textbf{H}}^{UR}\odot \textbf{H}^{RB}$. As long as $k'_{\bar{\textbf{H}}^{UR}}$ and $k'_{\textbf{H}^{RB}}$ are not zeros, lower bound of $k'_{\textbf{H}^T}=k'_{\bar{\textbf{H}}^{UR}\odot \textbf{H}^{RB}}$ is given as
{
\begin{equation*}
k'_{\textbf{H}^T}=k'_{\bar{\textbf{H}}^{UR}\odot \textbf{H}^{RB}}\ge \min{\{1+C-1, C\}}=C, \;{\rm if}\; N_R\ge \max{\{ (C-1)KM, CN\}}
.
\vspace{10pt}
\end{equation*}
}Considering that $\textbf{H}^T$ is partitioned into $C$ submatrices, clearly, $k'_{\textbf{H}^T}\le C$. 
Thus, the value of $k'_{\textbf{H}^T}$ is obtained as $C$, which means that $\textbf{H}^T$ has full column rank and (\ref{equ_HTh}) has the solution for \text{vec}(\textbf{T}). 
Thus, we prove the theorem. 
\end{proof}

Note that \textbf{H} has its right inverse $\textbf{H}^{\text{right}}=\textbf{H}^T\{\textbf{H}\textbf{H}^T\}^{-1}$.
Then, the relay beamformer \textbf{T} is given as
\setcounter{MaxMatrixCols}{1}
\begin{equation} \label{equ_relay_beam}
\textbf{T}=\text{vec}_{N_R}^{-1}(\textbf{null}^{\text{right}}_{\textbf{H}}+\textbf{H}^{\text{right}}\textbf{h})
,
\vspace{10pt}
\end{equation}
where $\text{vec}_{N_R}^{-1}(\cdot)$ is defined as a devectorization function which transforms the vector into a matrix with a column size of $N_R$.
For simplification, $\textbf{null}^{\text{right}}_{\textbf{H}}$ can be set as a zero vector.

\subsection{Achievable DoF of a Symmetric Uplink Cellular Network with IMAC-ODIA}\label{sub_ExtRel}

First, we present the result of this subsection, which describes the achievable DoF per cell for an uplink cellular network with IMAC-ODIA.
\vspace{10pt}
\begin{theorem}[Achievable DoF per cell for an uplink cellular network with IMAC-ODIA]\label{the_dof_cell}
The achievable DoF per cell of a $(C,K,M,N)$ uplink cellular network with IMAC-ODIA is given as 
\begin{equation*}\label{equ_dof}
\text{DoF}_{\text{cell}}=
{
\min\{N, KM\}\over 2
}
.
\vspace{10pt}
\end{equation*}
\end{theorem}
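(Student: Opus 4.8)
The plan is to prove achievability by showing that, once the relay beamformer enforces the ODIA condition, each base station observes an interference-free effective channel of rank $\min\{N,KM\}$, and that these streams are delivered over two time slots. First I would invoke \textit{Theorem \ref{the_exist_beamformer}} to guarantee a relay beamformer $\textbf{T}$ satisfying (\ref{equ_IA}), so that the interference term in (\ref{equ_IA_1&2}) vanishes identically. The combined observation at $\text{BS}_j$ then collapses to $\textbf{y}_{j,1}+\textbf{y}_{j,2}=\textbf{G}_j\hat{\textbf{x}}_j$ plus noise, where $\textbf{G}_j=\widehat{\textbf{H}}_j+\textbf{H}^{RB}_{j}\textbf{T}\widehat{\textbf{H}}^{UR}_j\in\mathbb K^{N\times KM}$ is the effective desired channel carrying the $KM$ symbols of $\hat{\textbf{x}}_j$. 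Since the aggregate noise $\textbf{H}^{RB}_{j}\textbf{T}\textbf{n}_R+\textbf{n}_{j,1}+\textbf{n}_{j,2}$ does not scale with the transmit power, it is irrelevant to the DoF and I would set it aside.

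The central step is to argue $\text{rank}(\textbf{G}_j)=\min\{N,KM\}$ with probability one. The key observation is that the ODIA condition (\ref{equ_IA}), and hence the solution $\textbf{T}$ in (\ref{equ_relay_beam}), depends only on the interfering channels $\bar{\textbf{H}}_j$, $\bar{\textbf{H}}^{UR}_j$ and on $\textbf{H}^{RB}_j$, none of which involve the direct desired channel $\widehat{\textbf{H}}_j$. Because every channel entry is i.i.d. from a continuous distribution, $\widehat{\textbf{H}}_j$ is statistically independent of the additive term $\textbf{B}_j:=\textbf{H}^{RB}_{j}\textbf{T}\widehat{\textbf{H}}^{UR}_j$. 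Conditioning on $\textbf{B}_j$, the matrix $\textbf{G}_j=\widehat{\textbf{H}}_j+\textbf{B}_j$ is a continuously-distributed random matrix whose maximal-order minors are non-vanishing polynomials in the entries of $\widehat{\textbf{H}}_j$; thus $\textbf{G}_j$ attains its maximal rank $\min\{N,KM\}$ almost surely, and integrating over the law of $\textbf{B}_j$ preserves this.

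With $\textbf{G}_j$ of full rank $\min\{N,KM\}$, the desired symbols span a $\min\{N,KM\}$-dimensional interference-free subspace, so $\text{BS}_j$ recovers $\min\{N,KM\}$ independent streams, e.g. by applying a left inverse $\textbf{G}_j^{\text{left}}$ when $KM\le N$, or by activating only $N$ of the $KM$ available streams when $KM>N$. Since these streams are delivered over two channel uses (the half-duplex first and second time slots), the per-cell count is normalized by a factor of two, yielding $\text{DoF}_{\text{cell}}=\min\{N,KM\}/2$.

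The main obstacle I anticipate is precisely the full-rank claim for $\textbf{G}_j$: one must remember that $\textbf{T}$ is \emph{not} optimized to maximize the desired-signal rank but only to null the interference, so a priori the additive term $\textbf{B}_j$ could be adversarially aligned with $-\widehat{\textbf{H}}_j$. The independence of $\widehat{\textbf{H}}_j$ from $\textbf{B}_j$ is exactly what excludes this degenerate event, and carefully tracking which channel submatrices feed into $\textbf{T}$ versus into $\widehat{\textbf{H}}_j$ is the crux of making the argument rigorous.
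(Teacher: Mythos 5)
Your proof is correct, and its skeleton matches the paper's: invoke \textit{Theorem \ref{the_exist_beamformer}} for the existence of $\textbf{T}$, note that the interference term in (\ref{equ_IA_1&2}) vanishes, reduce the per-cell DoF to the rank of the effective channel $\widehat{\textbf{H}}_{\text{eff},j}=\widehat{\textbf{H}}_j+\textbf{H}^{RB}_{j}\textbf{T}\widehat{\textbf{H}}^{UR}_j$, and divide by two for the two time slots. Where you genuinely differ is in justifying the crux, $\text{rank}(\widehat{\textbf{H}}_{\text{eff},j})=\min\{N,KM\}$ almost surely. The paper substitutes the IA condition back into the effective channel: since $\bar{\textbf{H}}^{UR}_j$ admits a left inverse, (\ref{equ_IA}) gives $\textbf{H}^{RB}_{j}\textbf{T}=-\bar{\textbf{H}}_j(\bar{\textbf{H}}^{UR}_j)^{\text{left}}+\textbf{null}^{\text{left}}_{\bar{\textbf{H}}^{UR}_j}$, so that $\widehat{\textbf{H}}_{\text{eff},j}=\widehat{\textbf{H}}_j-\bar{\textbf{H}}_j(\bar{\textbf{H}}^{UR}_j)^{\text{left}}\widehat{\textbf{H}}^{UR}_j+\textbf{null}^{\text{left}}_{\bar{\textbf{H}}^{UR}_j}\widehat{\textbf{H}}^{UR}_j$, an explicit closed form in which $\widehat{\textbf{H}}_j$ enters additively and separately from all interference-related quantities; full rank with probability one is then asserted by genericity. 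You instead keep $\textbf{T}$ abstract and argue probabilistically: $\textbf{T}$ in (\ref{equ_relay_beam}) is a measurable function of channels that never include the direct desired channels $\textbf{H}_{j,(k,j)}$, hence $\widehat{\textbf{H}}_j$ is independent of $\textbf{B}_j=\textbf{H}^{RB}_{j}\textbf{T}\widehat{\textbf{H}}^{UR}_j$, and conditioning on $\textbf{B}_j$ together with the non-vanishing-minor (Fubini) argument yields full rank almost surely. Both proofs rest on the same structural fact --- the desired direct channel never feeds into the design of $\textbf{T}$ --- but your version is tighter on the measure-theoretic step: the paper's appeal to ``i.i.d. entries of each submatrix'' glosses over the fact that the composite term $\bar{\textbf{H}}_j(\bar{\textbf{H}}^{UR}_j)^{\text{left}}\widehat{\textbf{H}}^{UR}_j$ has dependent entries, and your independence-plus-conditioning argument is exactly what makes that claim rigorous. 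What the paper's route buys in exchange is the explicit per-BS expression for $\textbf{H}^{RB}_j\textbf{T}$, obtained without reference to the global system (\ref{equ_HTh}), which also makes transparent that the conclusion is insensitive to the choice of null-space component in $\textbf{T}$.
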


\begin{proof}
With the relay beamformer $\textbf{T}$ in (\ref{equ_relay_beam}), the inter-cell interferences can be aligned at the relay and cancelled at each receiver node. 
Thus, the DoF of $\text{BS}_j$ for two transmission time slots can be simply calculated as the rank of the effective channel of the desired data vector $\hat{\textbf{x}}_j$ and divided by two. 
From (\ref{equ_IA_1&2}) and (\ref{equ_IA}), the effective channel $\widehat{\textbf{H}}_{\text{eff},j}$ of $\text{BS}_j$ for the desired signal is defined as 
$\widehat{\textbf{H}}_{\text{eff},j}\hat{\textbf{x}}_j=\textbf{y}_{j,1}+\textbf{y}_{j,2}$ and its DoF are given as
\begin{equation}\label{equ_chan_eff}
\widehat{\textbf{H}}_{\text{eff}, j}=\widehat{\textbf{H}}_j+\textbf{H}^{RB}_{j}\textbf{T}\widehat{\textbf{H}}^{UR}_j,
\vspace{10pt}
\end{equation}
\begin{equation*}
\text{DoF}_{j}=
{r_{\widehat{\textbf{H}}_{\text{eff},j}}\over 2}
.
\vspace{10pt}
\end{equation*}

Since the left inverse of $\bar{\textbf{H}}^{UR}_j$ exists as $\{(\bar{\textbf{H}}^{UR}_j)^{T}\bar{\textbf{H}}^{UR}_j\}^{-1}(\bar{\textbf{H}}^{UR}_j)^{T}$, (\ref{equ_IA}) can be transformed into $\textbf{H}^{RB}_{j}\textbf{T}=-\bar{\textbf{H}}_j(\bar{\textbf{H}}^{UR}_j)^{\text{left}}+\textbf{null}^{\text{left}}_{\bar{\textbf{H}}^{UR}_j}$.
Therefore, (\ref{equ_chan_eff}) can be rewritten as

\begin{equation}\label{equ_chan_eff2}
\widehat{\textbf{H}}_{\text{eff},j}=\widehat{\textbf{H}}_j-\bar{\textbf{H}}_j(\bar{\textbf{H}}^{UR}_j)^{\text{left}}\widehat{\textbf{H}}^{UR}_j+\textbf{null}^{\text{left}}_{\bar{\textbf{H}}^{UR}_j}\widehat{\textbf{H}}^{UR}_j \in \mathbb K^{N\times KM}
.
\vspace{10pt}
\end{equation}
Considering that the entries of each submatrix in the augmented matrices $\widehat{\textbf{H}}_j, \bar{\textbf{H}}_j, (\bar{\textbf{H}}^{UR}_j)^{\text{left}}, \text{and }  \widehat{\textbf{H}}^{UR}_j$ are drawn i.i.d. from a continuous distribution, the rank of (\ref{equ_chan_eff2}) can be derived with probability one as
\begin{equation*}
r_{\widehat{\textbf{H}}_{\text{eff},j}}=\min\{N, KM\}
.
\vspace{10pt}
\end{equation*}
Finally, the achievable DoF per cell for a symmetric uplink cellular network with IMAC-ODIA is given as
\begin{equation*}
\text{DoF}_{\text{cell}}=
{
\min\{N, KM\}\over 2
}
.
\end{equation*}
\end{proof}

\subsection{IMAC-ODIA for an Asymmetric Uplink Cellular Network}

The IMAC-ODIA for an asymmetric uplink cellular network is described in this subsection, where `asymmetric' means that each cell has a different configuration. 
Consider an uplink cellular network with $C$ cells and $K_j$ users served by a base station $\text{BS}_j$ with $N_j$ antennas for $j\in \{1, ..., C\}$. 
User $(k,j)$ is denoted by $\text{UE}_{(k,j)},\text{ } k\in \{1, ..., K_j\} \text{, which is served by } \text{BS}_j$. $\text{UE}_{(k,j)}$ is assumed to have $M_{(k,j)}$ antennas. 
Similarly, a relay is assumed to have $N_R$ antennas, and all channel parameters and matrices are defined in the same manner as the symmetric case with augmented matrices (\ref{H_bar}) through (\ref{H_UR_hat}). 
The number of required relay antennas and the DoF of each cell are presented in the form of the following theorem.
\vspace{10pt}
\begin{theorem}[IMAC-ODIA for an asymmetric uplink cellular network]
The relay beamformer of IMAC-ODIA for an asymmetric uplink cellular network exists if the following inequality is satisfied
\begin{equation*}\label{equ_asy}
N_R\ge \max\{{(\max_{j\in \{1, ..., C\}}\sum_{i\ne j}\sum_{k=1}^{K_i}{M_{(k,i)}}),(\sum_{j=1}^{C}N_j})\}
\vspace{10pt}
\end{equation*}
and the achievable DoF for $\text{BS}_j$ is given as
\begin{equation*}\label{equ_asy_dof}
\text{DoF}_j=
{
{1\over 2}\min\{N_j, \sum_{k=1}^{K_j}{M_{(k,j)}}\}
}
.
\vspace{10pt}
\end{equation*}
 \end{theorem}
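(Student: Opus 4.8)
The plan is to follow the two-part template set by Theorems~\ref{the_exist_beamformer} and~\ref{the_dof_cell}, treating the beamformer-existence claim and the per-cell DoF claim separately, and only tracking where the non-uniform cell configuration forces the dimension bookkeeping to change.

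For existence, I would first rewrite the asymmetric IA condition exactly as in~(\ref{equ_IA}), namely $\bar{\textbf{H}}_j+\textbf{H}^{RB}_{j}\textbf{T}\bar{\textbf{H}}^{UR}_j=\textbf{0}$ for $j\in\{1,\dots,C\}$, where now $\bar{\textbf{H}}^{UR}_j\in\mathbb K^{N_R\times\sum_{i\ne j}\sum_{k=1}^{K_i}M_{(k,i)}}$ and $\textbf{H}^{RB}_j\in\mathbb K^{N_j\times N_R}$ have cell-dependent sizes. Applying \textit{Proposition~\ref{prop_kronec}} to vectorize each of the $C$ equations and stacking them gives a single linear system $\textbf{H}\,\text{vec}(\textbf{T})=\textbf{h}$, and by \textit{Proposition~\ref{prop_kronec_trans}} the transpose $\textbf{H}^T$ is again the Khatri--Rao product $\bar{\textbf{H}}^{UR}\odot\textbf{H}^{RB}$ of the collected matrices $\bar{\textbf{H}}^{UR}=[\bar{\textbf{H}}^{UR}_1\ \cdots\ \bar{\textbf{H}}^{UR}_C]$ and $\textbf{H}^{RB}=[(\textbf{H}^{RB}_1)^T\ \cdots\ (\textbf{H}^{RB}_C)^T]$. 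Since $\textbf{h}$ has generic continuous entries, solvability reduces to showing $\textbf{H}$ has full row rank, i.e.\ $\textbf{H}^T$ has full column rank, i.e.\ $k'_{\textbf{H}^T}=C$.

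The main obstacle is that \textit{Proposition~\ref{prop_krank_uni}} is stated only for uniformly partitioned matrices, whereas the asymmetric blocks $\bar{\textbf{H}}^{UR}_j$ and $(\textbf{H}^{RB}_j)^T$ now have unequal column counts, so I cannot quote it verbatim and must instead argue the two generalized Kruskal ranks directly. For $\bar{\textbf{H}}^{UR}$ the structural observation behind Fig.~\ref{fig_matrix_URbar} still holds: any two blocks $\bar{\textbf{H}}^{UR}_i,\bar{\textbf{H}}^{UR}_j$ share the relay channels of all cells other than $i$ and $j$, so taking two blocks produces repeated columns and $k'_{\bar{\textbf{H}}^{UR}}\le1$, while a single block $\bar{\textbf{H}}^{UR}_j$ has full column rank with probability one once $N_R$ is at least its width, giving $k'_{\bar{\textbf{H}}^{UR}}=1$ precisely when $N_R\ge\max_{j}\sum_{i\ne j}\sum_{k=1}^{K_i}M_{(k,i)}$. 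For $\textbf{H}^{RB}$, whose $C$ blocks have independent continuous entries, the generalized Kruskal rank reaches its ceiling $C$ exactly when all its columns are jointly independent, i.e.\ when $N_R\ge\sum_{j=1}^{C}N_j$; this is the natural non-uniform analogue of the $\lfloor N_R/N\rfloor$ count and needs only a genericity argument rather than Proposition~\ref{prop_krank_uni}. Feeding $k'_{\bar{\textbf{H}}^{UR}}=1$ and $k'_{\textbf{H}^{RB}}=C$ into \textit{Proposition~\ref{prop_krank}} yields $k'_{\textbf{H}^T}\ge\min\{1+C-1,C\}=C$, and since $\textbf{H}^T$ has only $C$ partitions this is tight, giving full column rank and hence the stated antenna condition.

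For the DoF claim I would reuse the argument of Theorem~\ref{the_dof_cell} cell by cell. Once $\textbf{T}$ exists, the inter-cell interference at $\text{BS}_j$ cancels across the two slots, so $\text{DoF}_j=r_{\widehat{\textbf{H}}_{\text{eff},j}}/2$ with $\widehat{\textbf{H}}_{\text{eff},j}=\widehat{\textbf{H}}_j+\textbf{H}^{RB}_j\textbf{T}\widehat{\textbf{H}}^{UR}_j$. Using the left inverse of $\bar{\textbf{H}}^{UR}_j$ to solve~(\ref{equ_IA}) for $\textbf{H}^{RB}_j\textbf{T}$ exactly as in~(\ref{equ_chan_eff2}) gives an effective channel in $\mathbb K^{N_j\times\sum_{k=1}^{K_j}M_{(k,j)}}$ whose summands have generic i.i.d.\ entries, so its rank is $\min\{N_j,\sum_{k=1}^{K_j}M_{(k,j)}\}$ with probability one, and halving yields $\text{DoF}_j=\tfrac12\min\{N_j,\sum_{k=1}^{K_j}M_{(k,j)}\}$. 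The only point requiring care here is that $\bar{\textbf{H}}^{UR}_j$ admits a left inverse, which holds because the existence condition already forces $N_R\ge\sum_{i\ne j}\sum_{k=1}^{K_i}M_{(k,i)}$, guaranteeing full column rank.
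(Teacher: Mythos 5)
Your proposal is correct and takes essentially the same approach as the paper: the paper omits this proof altogether, stating only that it is identical to the symmetric case, and your write-up is exactly that adaptation of Theorems~\ref{the_exist_beamformer} and~\ref{the_dof_cell} with the cell-dependent dimensions tracked through the vectorization, Khatri--Rao, and Kruskal-rank steps. Your one substantive addition --- replacing Proposition~\ref{prop_krank_uni}, which assumes uniformly partitioned blocks, with a direct genericity argument for the unequal-width blocks of $\bar{\textbf{H}}^{UR}$ and $\textbf{H}^{RB}$ --- is a detail the paper glosses over and is needed for the asymmetric case to go through.
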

\vspace{10pt}
The proof is identical to that of the symmetric case, and thus it is omitted here. 

It is clear that many features of IMAC-ODIA make it possible for a cellular network to adopt interference alignment.
Further, we extend the proposed scheme to a downlink cellular network with the help of uplink-downlink duality for relay-aided cellular networks.
This is discussed in the next section. 

\section{IBC-ODIA for Donwlink Cellular Network} \label{sec_ibcodia}

In this section, we extend the idea to a downlink cellular network, which can be modeled as an IBC. Here, we only consider a symmetric downlink cellular network because the IBC-ODIA scheme for an asymmetric case can be described in the same manner as the IMAC-ODIA scheme for an asymmetric cellular network. 
Channel reciprocity refers to the relationship between uplink and downlink channels, that is, each downlink channel matrix for a pair of nodes is a complex conjugate transpose of the corresponding uplink channel matrix for the same node pair.

\begin{figure}
\centering
\subfigure[Block diagram of uplink transmission with IMAC-ODIA]{\includegraphics[scale=0.5]{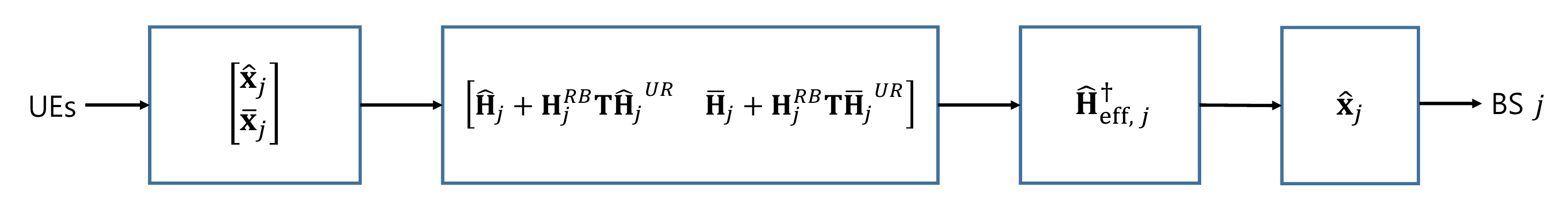}\label{fig_block_uplink}}
\subfigure[Complex conjugate transpose of the uplink transmission]{\includegraphics[scale=0.5]{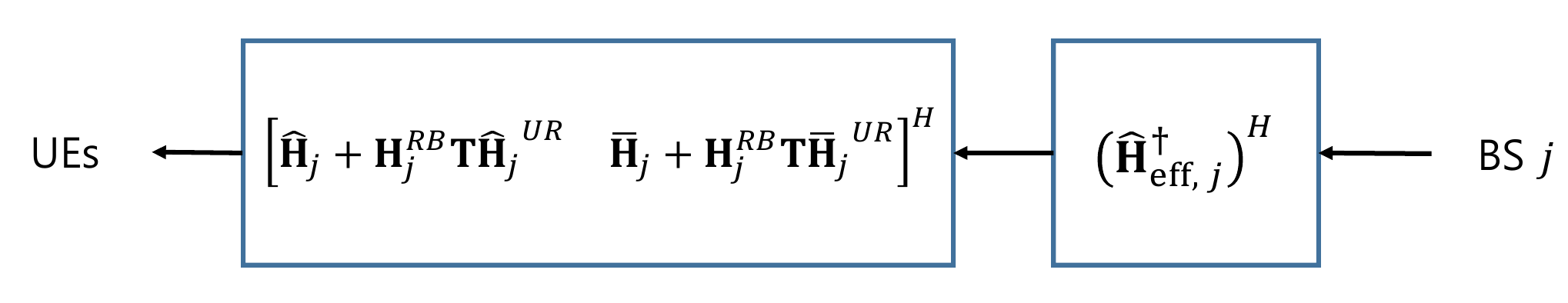}\label{fig_block_downlink}}
\caption{Block diagrams of the proposed IMAC-ODIA scheme.}
\label{fig_block}
\end{figure}

\subsection{System Model and Inter-Cell Interference Alignment}

We initially apply the result pertaining to duality from earlier work \cite{dual15} to the uplink cellular network with IMAC-ODIA, which gives us the motivation for IBC-ODIA for a downlink cellular network. For the uplink case, the desired data stream $\hat{\textbf{x}}_j$ for $\text{BS}_j$ can be obtained from (\ref{equ_IA_1&2}) and (\ref{equ_chan_eff}) as
\begin{equation}\label{equ_deco}
\hat{\textbf{x}}_j=\widehat{\textbf{H}}_{\text{eff},j}^{\dagger}(\textbf{y}_{j,1}+\textbf{y}_{j,2})
,
\vspace{10pt}
\end{equation}
where $\widehat{\textbf{H}}_{\text{eff},j}^{\dagger}$ serves as a decorrelator which handles intra-cell interference. 
In Fig. \ref{fig_block_uplink}, the uplink transmission process with IMAC-ODIA is described. 
If we assume channel reciprocity by a complex conjugate transpose operation, Fig. \ref{fig_block_uplink} is transformed into Fig. \ref{fig_block_downlink} without considering the data stream. 
This provides the intuition for IBC-ODIA, where transmitter (BS) beamforming may be required for the IBC-ODIA scheme for a downlink cellular network, but no decorrelator is needed at the receiver (UE). 
Therefore, IBC-ODIA does not require any complex operation at the UE compared to BS, which supports the feasibility of the proposed IBC-ODIA scheme for a downlink cellular network.

First, we describe the symmetric downlink cellular network, that is, $C$ cells and $K$ users in each cell. Cell $j$ is served by a single base station $\text{BS}_j$ for $j\in \{1, ..., C\}$. User $(k,j)$ is denoted by $\text{UE}_{(k,j)}$, $k\in \{1, ..., K\} \text{, which is served by } \text{BS}_j$ as in Fig. \ref{fig_systemmodel_downlink}. 
Each UE, BS, and relay is assumed to have $M$, $N$, and $N_R$ antennas, respectively. 
The channel from $\text{BS}_j$ to $\text{UE}_{(k,i)}$ is denoted by the $M\times N$ matrix $\textbf{H}_{(k, i),j}$, the channel from $\text{BS}_{j}$ to the relay is denoted by the $N_R\times N$ matrix $\textbf{H}^{BR}_{j}$, and the channel from relay to $\text{UE}_{(k,j)}$  is denoted by the $M\times N_R$ matrix $\textbf{H}^{RU}_{(k,j)}$. $\textbf{x}_{j}$ denotes the transmitted data vector from $\text{BS}_j$. 
Similarly to the IMAC-ODIA scheme, the augmented matrices for the IBC-ODIA scheme are defined as

\begin{figure}
\centering
\includegraphics[scale=0.5]{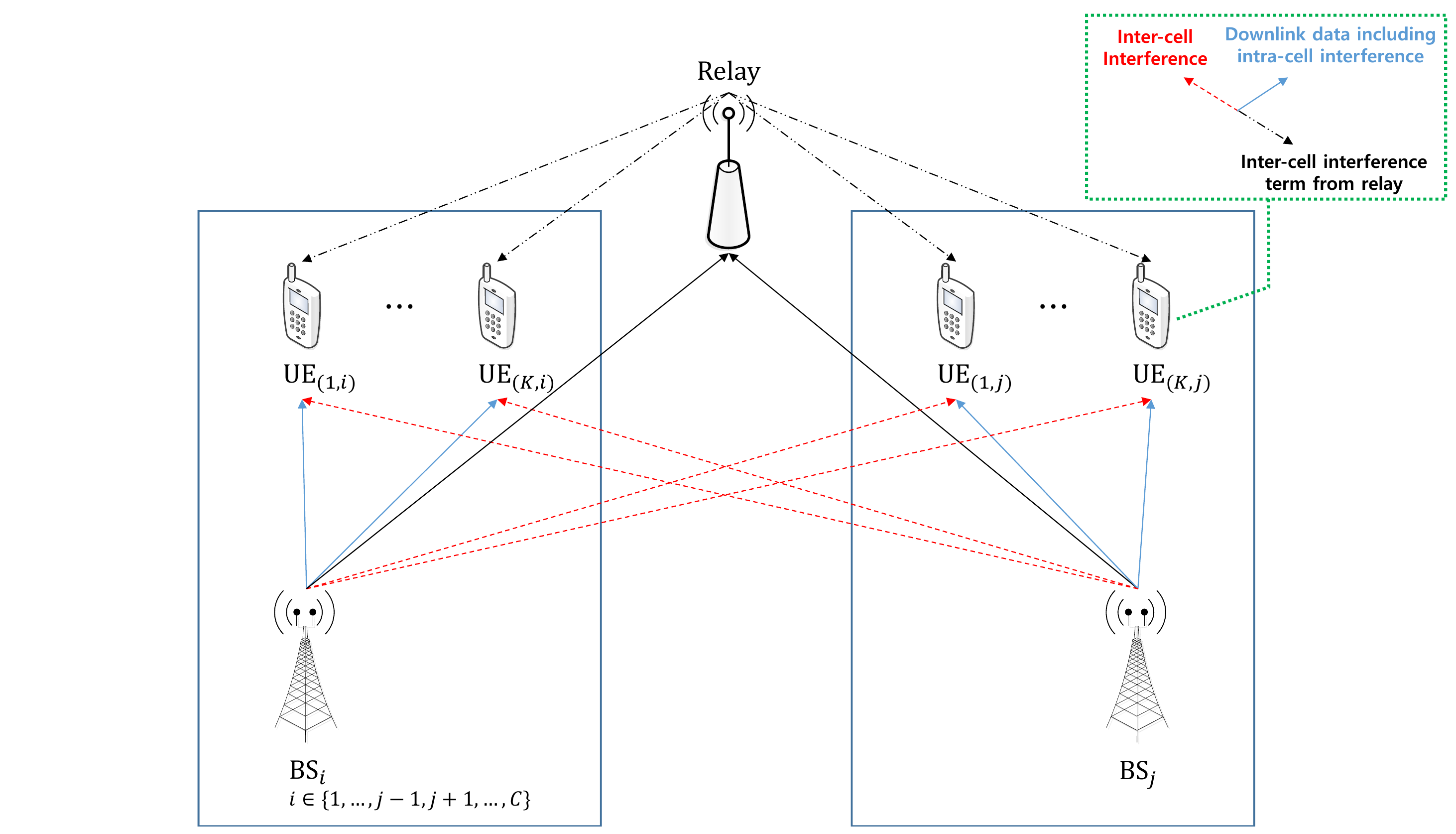}
\caption{Downlink cellular network with IBC-ODIA.}
\label{fig_systemmodel_downlink}
\end{figure}
\begin{align}
\setcounter{MaxMatrixCols}{6}
\bar{\textbf{H}}_{(k,j)}&=
\begin{bmatrix}
\textbf{H}_{(k,j),1}&...&\textbf{H}_{(k,j),j-1}&\textbf{H}_{(k,j),j+1}&...&\textbf{H}_{(k,j),C}
\end{bmatrix}\in \mathbb K^{M \times (C-1)N}, \nonumber \\
\nonumber \\
\setcounter{MaxMatrixCols}{6}
\bar{\textbf{H}}^{BR}_{j}&=
\begin{bmatrix}
\textbf{H}^{BR}_{1}&...&\textbf{H}^{BR}_{j-1}&\textbf{H}^{BR}_{j+1}&...&\textbf{H}^{BR}_{C}
\end{bmatrix}\in \mathbb K^{N_R \times (C-1)N} \nonumber \\ \nonumber
\end{align}
and the inter-cell interference vector for UEs at cell $j$ is defined as

\begin{equation}
\setcounter{MaxMatrixCols}{3}
\bar{\textbf{x}}_j=
\begin{bmatrix}
\textbf{x}_{1}\\ \vdots\\ \textbf{x}_{j-1}\\ \textbf{x}_{j+1}\\ \vdots\\ \textbf{x}_{C}
\end{bmatrix}\in \mathbb K^{(C-1)N \times 1}. \nonumber
\end{equation}

Similarly to the IMAC-ODIA case, by ignoring the noise term, the total received vector at $\text{UE}_{(k,j)}$ for two time slots is given as

\begin{equation*}\label{equ_IA_IBC}
\textbf{y}_{(k,j),1}+\textbf{y}_{(k,j),2}=
(\textbf{H}_{(k,j),j}+\textbf{H}^{RU}_{(k,j)}\textbf{T}_{\text{DL}}\textbf{H}^{BR}_j)\textbf{x}_j+(\bar{\textbf{H}}_{(k,j)}+\textbf{H}^{RU}_{(k,j)}\textbf{T}_{\text{DL}}\bar{\textbf{H}}^{BR}_{j})\bar{\textbf{x}}_j
,
\vspace{10pt}
\end{equation*}
where $\textbf{T}_{\text{DL}}$ represents the relay beamforming matrix for downlink transmission.
Note that $\textbf{T}_{\text{DL}}$ only aligns the inter-cell interferences. 
The remaining intra-cell interferences will be aligned by beamforming at the BS, which will be described later. The BS beamforming for intra-cell IA originates from the intuition for uplink-downlink duality as discussed earlier.

For $k\in \{1, ..., K\}$ and $i\in \{1, ..., C\}$, the interference alignment condition should be satisfied as
\begin{equation}\label{equ_IA_cond_DL}
\bar{\textbf{H}}_{(k,j)}+\textbf{H}^{RU}_{(k,j)}\textbf{T}_{\text{DL}}\bar{\textbf{H}}^{BR}_{j}=0.
\vspace{10pt}
\end{equation}
Further, $\textbf{T}_{\text{DL}}$ satisfying the above condition is given as 

\setcounter{MaxMatrixCols}{1}
\begin{equation*} \label{equ_relay_beam_DL}
\textbf{T}_{\text{DL}}=\text{vec}_{N_R}^{-1}\left(
\textbf{null}^{\text{right}}_{\textbf{H}_{\text{DL}}}+(\textbf{H}_{\text{DL}})^{\text{right}}\textbf{h}_{\text{DL}}
\right),
\vspace{10pt}
\end{equation*}
where $\textbf{H}_{\text{DL}}=\begin{bmatrix}
(\bar{\textbf{H}}^{BR}_1)^T\otimes \textbf{H}^{RU}_{1,1}\\ (\bar{\textbf{H}}^{BR}_1)^T\otimes \textbf{H}^{RU}_{2,1}\\ \vdots\\ (\bar{\textbf{H}}^{BR}_1)^T\otimes \textbf{H}^{RU}_{K,1} \\(\bar{\textbf{H}}^{BR}_2)^T\otimes \textbf{H}^{RU}_{1,2}\\ \vdots\\ (\bar{\textbf{H}}^{BR}_C)^T\otimes \textbf{H}^{RU}_{K,C}
\end{bmatrix}$ and $\textbf{h}_{\text{DL}}=\begin{bmatrix}
\text{vec}(-\bar{\textbf{H}}_{1,1})\\ \text{vec}(-\bar{\textbf{H}}_{2,1})\\ \vdots\\ \text{vec}(-\bar{\textbf{H}}_{K,1})\\ \text{vec}(-\bar{\textbf{H}}_{1,2})\\ \vdots\\ \text{vec}(-\bar{\textbf{H}}_{K,C})
\end{bmatrix}$.
\vspace{10pt}

The following theorem gives the number of antennas required at the relay for IBC-ODIA, where the proof is similar to that of IMAC-ODIA.

\vspace{10pt}
\begin{theorem}[Required number of relay antennas for the downlink cellular network with IBC-ODIA]
The relay beamformer $\textbf{T}_{\text{DL}}$ for the IBC-ODIA satisfying (\ref{equ_IA_cond_DL}) exists if $N_R\ge \max\{{(C-1)N,CKM}\}$.
\end{theorem}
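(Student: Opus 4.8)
The plan is to mirror the proof of \textit{Theorem \ref{the_exist_beamformer}} step for step, adapting only the dimensions and the number of stacked equations to the downlink geometry. First I would apply \textit{Proposition \ref{prop_kronec}} to vectorize each of the $CK$ alignment conditions in (\ref{equ_IA_cond_DL}), turning $\bar{\textbf{H}}_{(k,j)}+\textbf{H}^{RU}_{(k,j)}\textbf{T}_{\text{DL}}\bar{\textbf{H}}^{BR}_{j}=0$ into the linear system $\{(\bar{\textbf{H}}^{BR}_j)^T\otimes \textbf{H}^{RU}_{(k,j)}\}\text{vec}(\textbf{T}_{\text{DL}})=\text{vec}(-\bar{\textbf{H}}_{(k,j)})$, and then stack all $CK$ of them into the single equation $\textbf{H}_{\text{DL}}\,\text{vec}(\textbf{T}_{\text{DL}})=\textbf{h}_{\text{DL}}$ with $\textbf{H}_{\text{DL}}$ and $\textbf{h}_{\text{DL}}$ as already displayed above. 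As in the uplink argument, because the entries of $\textbf{h}_{\text{DL}}$ are i.i.d. from a continuous distribution, the consistency condition $\text{rank}([\textbf{H}_{\text{DL}}\;\textbf{h}_{\text{DL}}])=\text{rank}(\textbf{H}_{\text{DL}})$ holds almost surely if and only if $\textbf{H}_{\text{DL}}$ has full row rank, equivalently $\textbf{H}_{\text{DL}}^T$ has full column rank.

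Next I would use \textit{Proposition \ref{prop_kronec_trans}} to write $\textbf{H}_{\text{DL}}^T$ as the Khatri-Rao product $\bar{\textbf{H}}^{BR}\odot\textbf{H}^{RU}$, where $\bar{\textbf{H}}^{BR}$ is the horizontal concatenation of the $CK$ blocks $\bar{\textbf{H}}^{BR}_j$ (each distinct block repeated $K$ times, once per user in its cell) and $\textbf{H}^{RU}$ is the concatenation of the $CK$ blocks $(\textbf{H}^{RU}_{(k,j)})^T\in\mathbb K^{N_R\times M}$. The two generalized Kruskal ranks are then computed from \textit{Proposition \ref{prop_krank_uni}}. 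For $\textbf{H}^{RU}$, whose $CK$ submatrices each have i.i.d. continuous entries, $k'_{\textbf{H}^{RU}}=\min\{\lfloor N_R/M\rfloor,\,CK\}$, which equals $CK$ precisely when $N_R\ge CKM$. For $\bar{\textbf{H}}^{BR}$ the partitions are \emph{not} independent: any two blocks $\bar{\textbf{H}}^{BR}_i$ and $\bar{\textbf{H}}^{BR}_j$ share the $(C-2)$ common BS-to-relay submatrices $\textbf{H}^{BR}_\ell$ for $\ell\ne i,j$, and the $K$-fold repetition makes many pairs outright identical, so the columns of any two blocks are linearly dependent and $k'_{\bar{\textbf{H}}^{BR}}$ is capped at one; hence $k'_{\bar{\textbf{H}}^{BR}}=\min\{\lfloor N_R/((C-1)N)\rfloor,\,CK,\,1\}$, which equals one precisely when $N_R\ge(C-1)N$.

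I would close the argument with \textit{Proposition \ref{prop_krank}}: once both factors have positive generalized Kruskal rank, $k'_{\textbf{H}_{\text{DL}}^T}\ge\min\{k'_{\bar{\textbf{H}}^{BR}}+k'_{\textbf{H}^{RU}}-1,\,CK\}=\min\{1+CK-1,\,CK\}=CK$, and since $\textbf{H}_{\text{DL}}^T$ carries only $CK$ partitions the reverse bound $k'_{\textbf{H}_{\text{DL}}^T}\le CK$ is automatic, forcing $k'_{\textbf{H}_{\text{DL}}^T}=CK$. Full generalized Kruskal rank means every column of $\textbf{H}_{\text{DL}}^T$ enters a linearly independent family, i.e. full column rank, so the system is solvable and $\textbf{T}_{\text{DL}}$ exists whenever $N_R\ge\max\{(C-1)N,\,CKM\}$. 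The main obstacle I anticipate is the bookkeeping for $\bar{\textbf{H}}^{BR}$: I must argue carefully that the $K$-fold duplication of each $\bar{\textbf{H}}^{BR}_j$ together with the shared submatrices pins the generalized Kruskal rank to exactly one (not zero) under $N_R\ge(C-1)N$, and separately verify the dimensional feasibility $N_R^2\ge CK(C-1)NM$, which indeed follows by multiplying the two hypotheses $N_R\ge(C-1)N$ and $N_R\ge CKM$.
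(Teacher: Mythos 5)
Your proposal is correct and is precisely the adaptation the paper has in mind: the paper omits this proof, stating only that it is ``similar to that of IMAC-ODIA,'' and your argument carries out exactly that adaptation, correctly identifying that the roles swap in the downlink (the repeated/overlapping blocks now sit in $\bar{\textbf{H}}^{BR}$, capping its generalized Kruskal rank at one under $N_R\ge(C-1)N$, while the $CK$ independent relay-to-UE blocks supply $k'_{\textbf{H}^{RU}}=CK$ under $N_R\ge CKM$), after which \textit{Proposition \ref{prop_krank}} forces full column rank of $\textbf{H}_{\text{DL}}^T$ just as in \textit{Theorem \ref{the_exist_beamformer}}. No gaps; the dimensional sanity check $N_R^2\ge CK(C-1)NM$ you flag is already implied by the full-column-rank conclusion.
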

\vspace{10pt}
With the proper design of $\textbf{T}_{\text{DL}}$, the inter-cell interferences are fully cancelled at the UEs. 
As noted above, intra-cell interferences will be aligned by designing the beamformer at the BS.
The beamformer design is described in the next subsection.

\subsection{Intra-Cell Interference Alignment by BS Beamforming}
The beamformer at the BS for the intra-cell IA can be designed jointly with the relay beamformer. 
For simplicity, each UE desires to receive $d$ independent data streams from its serving BS. 
The transmitted signal from $\text{BS}_j$ can be given as
\vspace{10pt}
\begin{equation*}
\textbf{x}_{j}=\textbf{V}_{(1,j)}\textbf{s}_{(1,j)}+\textbf{V}_{(2,j)}\textbf{s}_{(2,j)}+...+\textbf{V}_{(K,j)}\textbf{s}_{(K,j)},
\vspace{10pt}
\end{equation*}
where $\textbf{s}_{(k,j)}\in \mathbb K^{d\times 1}$ denotes the data stream for $\text{UE}_{(k,j)}$ from $\text{BS}_{j}$ and $\textbf{V}_{(k,j)}\in \mathbb K^{N\times d}$ is the beamforming matrix designed at $\text{BS}_{j}$ for $\text{UE}_{(k,j)}$.

The intra-cell interference at each UE is said to be fully aligned by the BS beamforming if the following condition is satisfied as

\setcounter{MaxMatrixCols}{3}
\begin{equation*}\label{equ_DL_beam}
\begin{bmatrix}
\textbf{H}_{(1,j),j}+\textbf{H}^{RU}_{(1,j)}\textbf{T}_{\text{DL}}\textbf{H}^{BR}_j\\ \vdots\\ \textbf{H}_{(K,j),j}+\textbf{H}^{RU}_{(K,j)}\textbf{T}_{\text{DL}}\textbf{H}^{BR}_j
\end{bmatrix}
\begin{bmatrix}
\textbf{V}_{(1,j)}&...&\textbf{V}_{(K,j)}
\end{bmatrix}=\textbf{A}_{j}\textbf{V}_{j}=
\begin{bmatrix}
\textbf{I}_{M\times d}&\textbf{0}&\textbf{0} \\  \textbf{0}&\ddots&\textbf{0} \\ \textbf{0}&\textbf{0}&\textbf{I}_{M\times d}
\end{bmatrix}
.
\vspace{10pt}
\end{equation*}
Since each entry of $\textbf{H}_{(k,j),j}$ is drawn i.i.d. from a continuous distribution, $\textbf{A}_{j}$ is either a full row or a full column rank matrix with probability one with a right or left inverse matrix.
 
Hence, $\textbf{V}_{i}$ is expressed as

\setcounter{MaxMatrixCols}{4}
\begin{equation}\label{equ_DL_beam2}
\textbf{V}_{j}= \textbf{A}_{j}^{\dagger}
\begin{bmatrix}
\textbf{I}_{M\times d}&\textbf{0}&\textbf{0} \\  \textbf{0}&\ddots&\textbf{0} \\ \textbf{0}&\textbf{0}&\textbf{I}_{M\times d}
\end{bmatrix}
=
\begin{bmatrix}
\{\textbf{A}_{j}^{\dagger}\}_{1:d}&\{\textbf{A}_{j}^{\dagger}\}_{M+1:M+d}&\cdots&\{\textbf{A}_{j}^{\dagger}\}_{(K-1)M+1:(K-1)M+d}
\end{bmatrix}
.
\vspace{10pt}
\end{equation}
From (\ref{equ_DL_beam2}), we can design $\textbf{V}_{j}$ with $\textbf{A}_{j}^{\dagger}$, where

\begin{equation*}
\textbf{A}_{j}^{\dagger}=
\begin{cases}
\textbf{A}_{j}^{\text{right}}=\textbf{A}_{j}^T(\textbf{A}_{j}\textbf{A}_{j}^T)^{-1}, \;{\rm if}\; N\ge KM \\
\textbf{A}_{j}^{\text{left}}=(\textbf{A}_{j}^T\textbf{A}_{j})^{-1}\textbf{A}_{j}^T, \;{\rm if}\; N< KM.
\end{cases}
\vspace{10pt}
\end{equation*}
Note that $N\ge Kd$ should be guaranteed for successful data transmission. 
Thus, $\textbf{V}_{(k,j)}$ makes it possible for $\text{BS}_j$ to deliver messages to $\text{UE}_{(k,j)}$ successfully with DoF $d$, where $d\le M$ should be satisfied.

After designing the relay beamformer and the BS beamformer, the inter-cell and intra-cell interferences can be completely removed at each UE. 
Accordingly, we summarize the achievable DoF for a downlink cellular network with IBC-ODIA via the following theorem. 

\vspace{10pt}
\begin{theorem}[Achievable DoF for a downlink cellular network with IBC-ODIA]\label{dof_DL}
The achievable DoF per cell and per UE for a $(C,K,M,N)$ downlink cellular network with IBC-ODIA is expressed as 
\begin{equation*}\label{equ_dof_DL}
\text{DoF}_{\text{DL,cell}}={\min\{Kd, N\}\over 2}\le {\min\{KM, N\}\over 2},
\vspace{10pt}
\end{equation*}
\begin{equation*}\label{equ_dof_DL2}
\text{DoF}_{\text{DL,UE}}={\text{DoF}_{\text{DL,cell}}\over K}.
\vspace{10pt}
\end{equation*}
\end{theorem}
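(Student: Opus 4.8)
The plan is to show that the two beamformers constructed in this section---the relay beamformer $\textbf{T}_{\text{DL}}$ and the per-cell BS beamformer $\textbf{V}_j$---together annihilate all interference seen by every UE, leaving a clean $d$-dimensional channel per user, and then simply to count the surviving dimensions over the two time slots. First I would invoke the preceding theorem on the required number of relay antennas: provided $N_R\ge\max\{(C-1)N,CKM\}$, the beamformer $\textbf{T}_{\text{DL}}$ exists and satisfies the inter-cell IA condition (\ref{equ_IA_cond_DL}) for every $(k,j)$, so the inter-cell term $(\bar{\textbf{H}}_{(k,j)}+\textbf{H}^{RU}_{(k,j)}\textbf{T}_{\text{DL}}\bar{\textbf{H}}^{BR}_{j})\bar{\textbf{x}}_j$ vanishes and all inter-cell interference is removed at each UE.

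Next I would handle the intra-cell interference. Once the inter-cell terms are gone, the aggregate signal delivered to the $K$ users of cell $j$ is governed by the stacked effective channel $\textbf{A}_j$ whose $k$-th block is $\textbf{H}_{(k,j),j}+\textbf{H}^{RU}_{(k,j)}\textbf{T}_{\text{DL}}\textbf{H}^{BR}_j$. Because the entries of the direct channels are drawn i.i.d. from a continuous distribution, $\textbf{A}_j$ is generically full rank and its pseudoinverse $\textbf{A}_j^{\dagger}$ exists; choosing $\textbf{V}_j$ as in (\ref{equ_DL_beam2}) forces $\textbf{A}_j\textbf{V}_j$ to equal the block-diagonal rectangular identity, so $\text{UE}_{(k,j)}$ receives exactly its own $d$ streams $\textbf{s}_{(k,j)}$ while the $(K-1)d$ intra-cell interfering streams are cancelled. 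Hence each UE observes an interference-free $d$-dimensional signal, and the problem reduces to dimension counting.

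It then remains to count and to pin down the feasibility window. Each of the $K$ users in a cell obtains $d$ independent streams, giving $Kd$ streams per cell; since $\text{BS}_j$ has only $N$ antennas and must transmit all of them jointly through $\textbf{V}_j\in\mathbb K^{N\times Kd}$, the number of reliably separable streams is $\min\{Kd,N\}$, and the half-duplex two-slot protocol halves the rate, yielding $\text{DoF}_{\text{DL,cell}}=\min\{Kd,N\}/2$. The constraint $d\le M$ (a $M$-antenna UE cannot resolve more than $M$ streams) gives $Kd\le KM$, hence $\min\{Kd,N\}/2\le\min\{KM,N\}/2$, which is the stated bound, and splitting the per-cell DoF equally among the $K$ symmetric users gives $\text{DoF}_{\text{DL,UE}}=\text{DoF}_{\text{DL,cell}}/K$. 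I expect the only delicate point to be the feasibility bookkeeping---confirming that $\textbf{A}_j$ is almost surely full rank and that the pair of constraints $N\ge Kd$ and $d\le M$ is precisely what reconciles the two $\min$ expressions; the interference cancellation itself is immediate once $\textbf{T}_{\text{DL}}$ and $\textbf{V}_j$ are in place.
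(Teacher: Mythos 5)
Your proposal is correct and follows essentially the same route as the paper: the paper omits this proof as being ``similar to that of Theorem \ref{the_dof_cell}'', and your argument is exactly that adaptation --- invoke the existence of $\textbf{T}_{\text{DL}}$ to cancel inter-cell interference, use the BS beamformer $\textbf{V}_j$ built from $\textbf{A}_j^{\dagger}$ to cancel intra-cell interference, then count the surviving interference-free dimensions per cell and divide by two for the half-duplex two-slot protocol. Your closing remark correctly identifies the only delicate point, namely that the constraints $N\ge Kd$ and $d\le M$ are what make the $\min$ expressions and the stated inequality consistent, which matches the feasibility conditions the paper states alongside the $\textbf{V}_j$ construction.
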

Proof of \textit{Theorem} \ref{dof_DL} is similar to that of \textit{Theorem} \ref{the_dof_cell} and is thus omitted. 

\section{FD-ODIA for Full-Duplex Cellular Network} \label{sec_fdodia}

Furthermore, we focus on a cellular network, where full-duplex operation is feasible for BSs and UEs.
Due to the complexity of full-duplexing, it has not been a main concern for researchers, but it is known that full-duplexing, when possible, increases the throughput.
However, the need for a management scheme given the additional interference caused by full-duplex operation arises, which motivates the development of the proposed FD-ODIA scheme for a full-duplex cellular network.

\subsection{FD-ODIA for Full-Duplex Cellular Network}

Recently, several researchers have investigated interference alignment on full-duplex network due to the higher throughput achievable by the full-duplex mode \cite{fd16} \cite{fd15_1} \cite{fd15_2}.
In \cite{fd16}, it was proved that allowing for full-duplexing in each node exactly doubled the DoF performance compared to that of a half-duplexing multi-user interference channel.
The DoF derivation process in \cite{fd16} was based on network decomposition, where a similar methodology can be applied to a cellular network.
In \cite{fd15_1}, if each node operates in full-duplex mode, then a single full-duplex cell can be decomposed into two half-duplex cells, one on the uplink transmission and the other on the downlink transmission as shown in Fig. 6.

\begin{figure}
\centering
\includegraphics[scale=0.65]{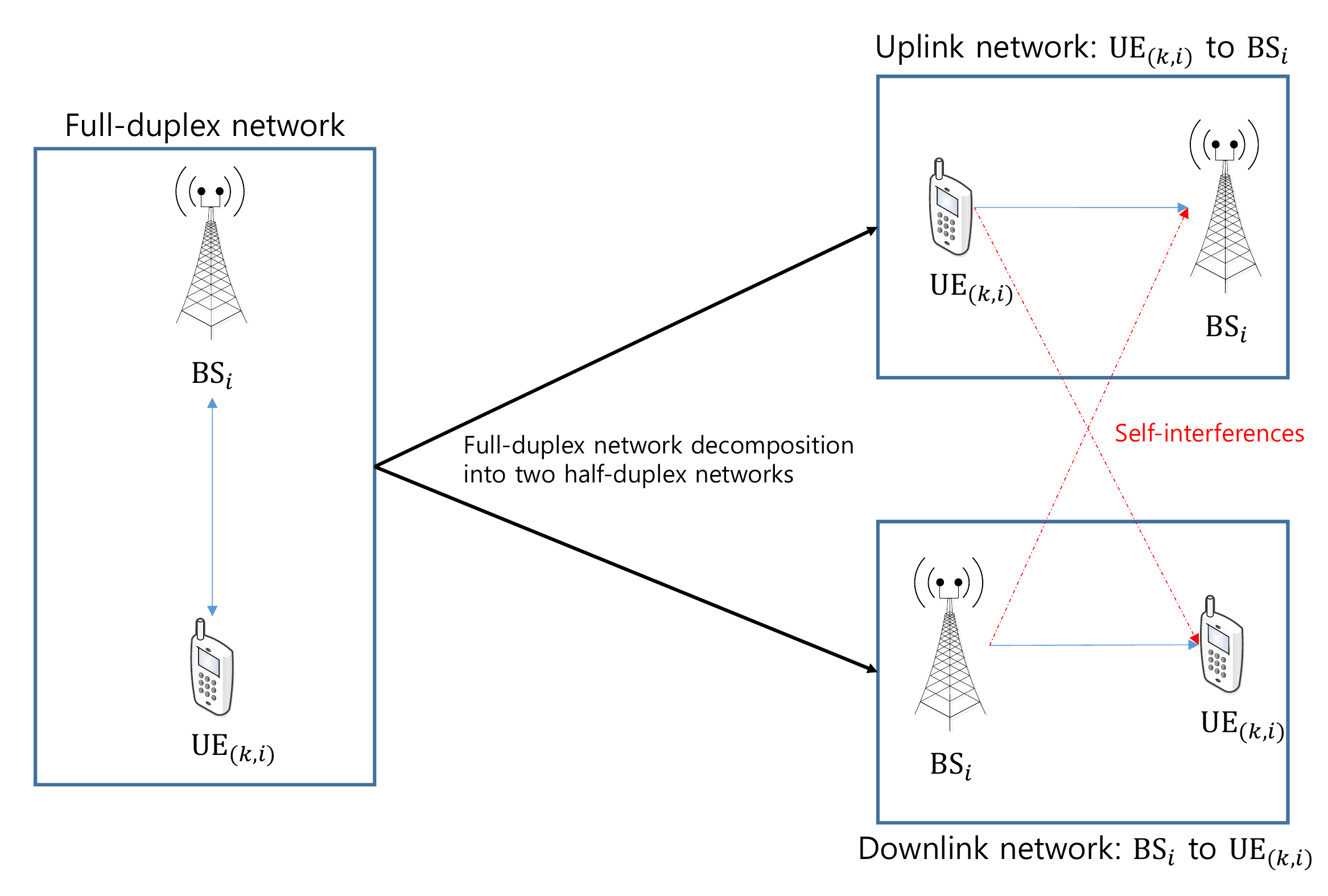}
\caption{Full-duplex network decomposition of a single full-duplex cell.}
\label{fig_network_decomposition}
\end{figure}

Considering full-duplex network decomposition as well as the fact that the proposed scheme can be operated on either the uplink or downlink, we can apply the IMAC-ODIA and the IBC-ODIA schemes proposed in the previous sections to a full-duplex cellular network, that is, FD-ODIA.
Although the relay operates in the half-duplex mode, the total DoF of the proposed FD-ODIA scheme can be doubled compared to the half-duplex case. 
We describe the symmetric full-duplex cellular network model below.

\begin{figure}
\centering
\includegraphics[scale=0.7]{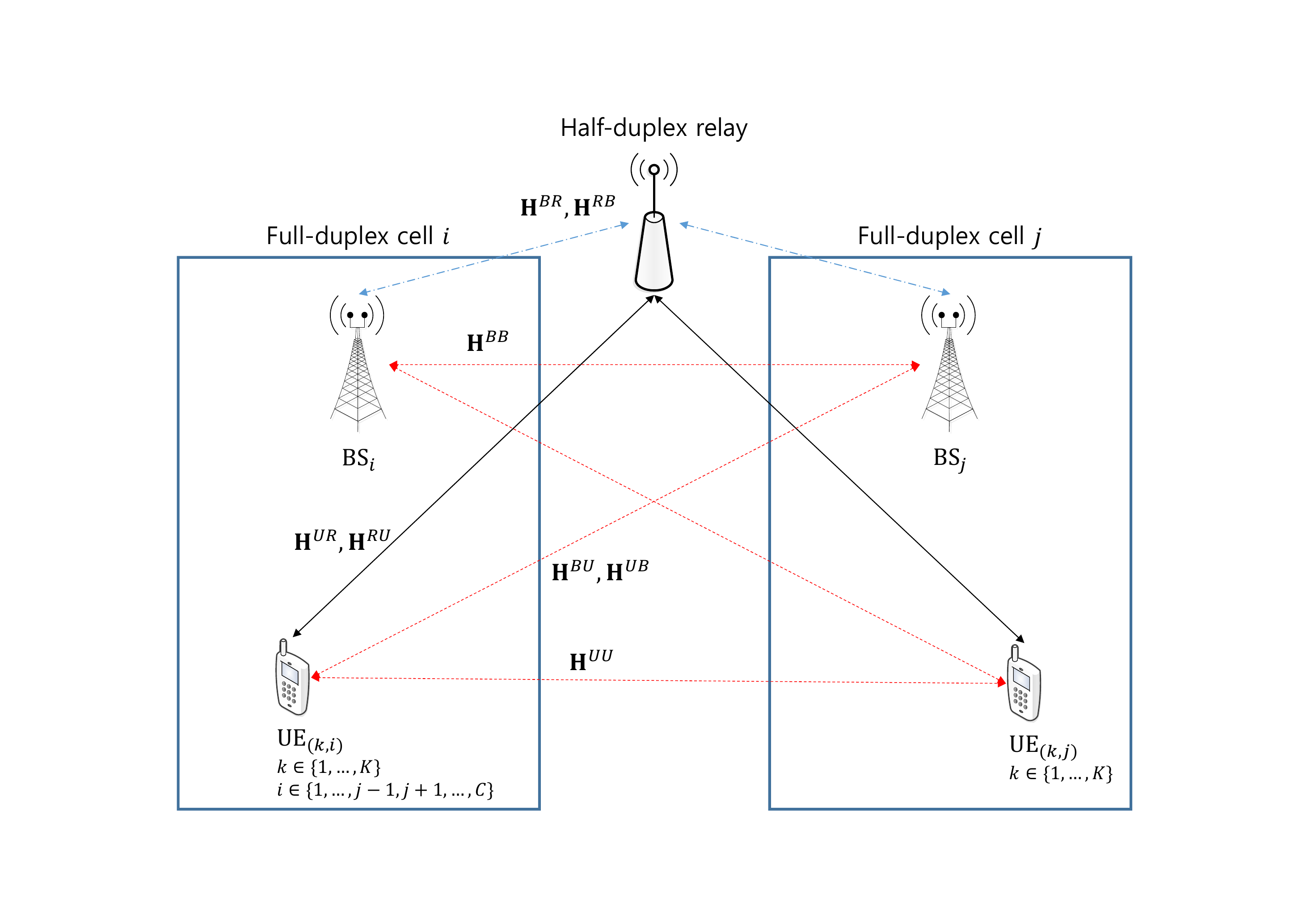}
\caption{Full-duplex cellular network served by a single half-duplex relay, where only interference-related channels are depicted.}
\label{fig_systemmodel_fd}
\end{figure}

Let $C$ and $K$ be the numbers of cells and users in each cell, respectively. 
Cell $j$ is served by a single base station in full-duplex mode denoted by $\text{BS}_j$ for $j\in \{1, ..., C\}$. 
User $(k,j)$ also operates in full-duplex mode denoted by $\text{UE}_{(k,j)}$, $k\in \{1, ..., K\}$, which is served by $\text{BS}_j$, as in Fig. \ref{fig_systemmodel_fd}. 
Note that the UEs only receive their desired signals from their corresponding BSs, implying the absence of device-to-device communication.
Each UE, each BS, and the relay is assumed to have $M$, $N$, and $N_{R}$ antennas, respectively. 
Note that the relay operates in half-duplex mode and the full-duplex relay will be discussed later.
Allowing for full-duplex operation at each source and destination node generates the following additional channel matrices.

- Channel from $\text{UE}_{(k,i)}$ to $\text{BS}_j$: $N\times M$ matrix $\textbf{H}^{UB}_{j,(k, i)}$ 

- Channel from $\text{UE}_{(k_1,i)}$ to $\text{UE}_{(k_2,j)}$: $M\times M$ matrix $\textbf{H}^{UU}_{(k_2,j),(k_1, i)}$ 

- Channel from $\text{UE}_{(k,j)}$ to relay: $N_R\times M$ matrix $\textbf{H}^{UR}_{(k, j)}$ 

- Channel from relay to $\text{UE}_{(k,j)}$: $M\times N_R$ matrix $\textbf{H}^{RU}_{(k,j)}$ 

- Channel from relay to $\text{BS}_j$: $N\times N_R$ matrix $\textbf{H}^{RB}_{j}$

- Channel from $\text{BS}_{i}$ to $\text{BS}_{j}$: $N\times N$ matrix $\textbf{H}^{BB}_{j,i}$

- Channel from $\text{BS}_{j}$ to $\text{UE}_{(k,i)}$: $M\times N$ matrix $\textbf{H}^{BU}_{(k,i),j}$

- Channel from $\text{BS}_{j}$ to relay: $N_R\times N$ matrix $\textbf{H}^{BR}_{j}$

\subsection{Achievable DoF of Full-Duplex Cellular Network with FD-ODIA}

Here, we derive the achievable DoF and the number of antennas required at the relay for a full-duplex cellular network with FD-ODIA as in the following theorem. 

\vspace{10pt}
\begin{theorem}[Achievable DoF for a full-duplex cellular network with FD-ODIA]\label{dof_FD}
For $N_R\ge C(KM+N)$, the achievable DoFs per cell, per BS, and per UE of the $(C,K,M,N)$ full-duplex cellular network are given as
\begin{equation*}\label{equ_dof_FD}
\text{DoF}_{\text{FD,cell}}=\text{DoF}_{\text{FD,BS}}+K\times \text{DoF}_{\text{FD,UE}}=\min\{KM,N\},
\vspace{10pt}
\end{equation*}
\begin{equation*}\label{equ_dof_FD_BS}
\text{DoF}_{\text{FD,BS}}={\min\{KM,N\}\over 2},
\vspace{10pt}
\end{equation*}
\begin{equation*}\label{equ_dof_FD_UE}
\text{DoF}_{\text{FD,UE}}={\min\{M,{N\over K}\}\over 2}
.
\vspace{10pt}
\end{equation*}
\end{theorem}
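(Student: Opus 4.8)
The plan is to leverage the full-duplex network decomposition established in Fig.~\ref{fig_network_decomposition}, combined with the two DoF results already proven for the half-duplex cases, namely \textit{Theorem~\ref{the_dof_cell}} (IMAC-ODIA uplink) and \textit{Theorem~\ref{dof_DL}} (IBC-ODIA downlink). The core idea is that a single full-duplex cell splits into an uplink sub-cell and a downlink sub-cell operating simultaneously, so the full-duplex cell DoF should be the \emph{sum} of the uplink and downlink per-cell DoFs, rather than each being halved by a separate time slot.

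\textbf{Step 1: Decomposition.} First I would invoke the decomposition of \cite{fd15_1} to split each full-duplex cell into an uplink half-duplex cell (UEs transmitting to $\text{BS}_j$, modeled as IMAC) and a downlink half-duplex cell (BSs transmitting to UEs, modeled as IBC). The BS acts as receiver in the uplink sub-cell and as transmitter in the downlink sub-cell, while each UE plays the dual roles. The half-duplex relay serves both sub-networks; since it has $N_R \ge C(KM+N)$ antennas, I would argue that the relay can partition its antenna resources to satisfy both the uplink IA condition~(\ref{equ_IA}) and the downlink IA condition~(\ref{equ_IA_cond_DL}) simultaneously. Specifically, $CKM$ antennas handle the downlink beamforming requirement ($N_R \ge CKM$ from \textit{Theorem~3}) and $CN$ handle the uplink requirement ($N_R \ge CN$ from \textit{Theorem~\ref{the_exist_beamformer}}), summing to the stated $C(KM+N)$ bound.

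\textbf{Step 2: Per-component DoF.} Next I would read off the achievable DoF for each decomposed component directly. The uplink sub-cell contributes $\text{DoF}_{\text{FD,BS}} = \min\{KM,N\}/2$ by \textit{Theorem~\ref{the_dof_cell}} (the BS is the uplink receiver). The downlink sub-cell contributes the per-UE DoF; from \textit{Theorem~\ref{dof_DL}}, $\text{DoF}_{\text{DL,cell}} = \min\{Kd,N\}/2$ divided among $K$ users, and with $d$ chosen so that each UE achieves $\text{DoF}_{\text{FD,UE}} = \min\{M, N/K\}/2$. The reasoning for the per-UE value is that each UE has $M$ receive antennas but the BS's $N$ transmit dimensions are shared among $K$ users, yielding the $\min\{M, N/K\}$ term before the factor of two.

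\textbf{Step 3: Summation and the main obstacle.} Finally I would combine these: $\text{DoF}_{\text{FD,cell}} = \text{DoF}_{\text{FD,BS}} + K \times \text{DoF}_{\text{FD,UE}}$. The factor of two in each half-duplex term is precisely cancelled by the full-duplex operation running both directions at once, so the $1/2$ disappears in the aggregate and one obtains $\min\{KM,N\}$. \textbf{The main obstacle} I anticipate is not the DoF arithmetic but rigorously handling the \emph{new} interference terms that full-duplex operation introduces and that do not appear in either half-duplex analysis: the BS-to-BS self-interference $\textbf{H}^{BB}_{j,i}$, the UE-to-UE cross-interference $\textbf{H}^{UU}_{(k_2,j),(k_1,i)}$, and the direct BS-to-UE and UE-to-BS leakage between the two simultaneous transmission directions. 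I would need to argue that these additional interference channels can either be absorbed into the relay's joint beamforming design (exploiting the surplus relay antennas) or aligned/cancelled by the BS beamformers $\textbf{V}_{(k,j)}$ of~(\ref{equ_DL_beam2}), so that the clean decomposition into independent uplink and downlink sub-problems is actually valid. Establishing that the ODIA cancellation mechanism—first-slot direct interference plus second-slot relayed interference summing to zero—still closes in the presence of these extra full-duplex links is the crux of the argument, after which the DoF expressions follow immediately from the cited theorems.
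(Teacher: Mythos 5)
Your high-level skeleton---full-duplex network decomposition, reuse of the IMAC-ODIA and IBC-ODIA machinery, BS beamforming for the intra-cell downlink interference, and the final arithmetic ${\min\{KM,N\}\over 2}+K\cdot{\min\{M,N/K\}\over 2}=\min\{KM,N\}$---is indeed the paper's approach, and your Steps 2 and 3 would go through once full interference cancellation is established. The genuine gap is that the step you yourself flag as ``the main obstacle'' is never carried out, and the mechanism you propose in Step 1 cannot carry it out. The paper does \emph{not} treat the full-duplex network as two independent half-duplex networks governed by (\ref{equ_IA}) and (\ref{equ_IA_cond_DL}). Instead it rebuilds the augmented interference matrices so that $\bar{\textbf{H}}^{BS}_j$ contains the BS-to-BS channels $\textbf{H}^{BB}_{j,i}$ alongside the other-cell UE-to-BS channels, and $\bar{\textbf{H}}^{UE}_{(k,j)}$ contains the other-cell BS-to-UE channels alongside \emph{all} UE-to-UE channels (including own-cell UEs); it then imposes the joint conditions (\ref{equ_IA_FD}) and (\ref{equ_IA_FD2})---a stacked system of $C+CK$ matrix equations---on a \emph{single} relay beamformer $\textbf{T}_{\text{FD}}$, whose existence is proved by the same vectorization/Khatri--Rao Kruskal-rank argument as in \textit{Theorem~\ref{the_exist_beamformer}}. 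That joint system is exactly where the threshold $N_R\ge\max\{(C-1)(KM+N),\,(CK-1)M+(C-1)N,\,C(KM+N)\}=C(KM+N)$ comes from: it is the total number of antennas in the network, not a sum of separate uplink and downlink budgets. Your proposal defers precisely this existence proof to a ``would need to argue'' clause.

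Moreover, the antenna-partition substitute fails on its own terms, for three reasons. First, conditions (\ref{equ_IA}) and (\ref{equ_IA_cond_DL}) contain no $\textbf{H}^{BB}$, $\textbf{H}^{UU}$, $\textbf{H}^{UB}$, or $\textbf{H}^{BU}$ terms, so satisfying them leaves every new full-duplex interference link untouched; and since UEs perform no beamforming in the ODIA framework, UE-to-UE interference can \emph{only} be removed by the relay's two-slot cancellation, which forces those channels into the relay's IA conditions as in (\ref{equ_IA_FD2}). Second, a partition of relay antennas cannot physically decouple the two sub-networks: whatever the ``uplink-serving'' antennas retransmit in the second slot is heard by every downlink UE as well (and vice versa), so a block-structured $\textbf{T}$ still couples the two sets of equations and a joint design is unavoidable. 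Third, even ignoring the cross terms, the correct per-subnetwork requirements are $N_R\ge\max\{(C-1)KM,\,CN\}$ (\textit{Theorem~\ref{the_exist_beamformer}}) and $N_R\ge\max\{(C-1)N,\,CKM\}$ (its IBC counterpart), not $CN$ and $CKM$; when $KM\gg N$ your allocation of $CN$ antennas to the uplink is insufficient, and a true partition would need on the order of $(2C-1)KM$ antennas, which exceeds $C(KM+N)$. That $CN+CKM$ happens to equal $C(KM+N)$ is a numerical coincidence, not a derivation. The repair is the paper's route: stack all $C+CK$ receivers' conditions with the enlarged interference matrices into one linear system, apply the Kruskal-rank existence argument to get $\textbf{T}_{\text{FD}}$, design the BS beamformers $\textbf{V}^{\text{FD}}_j$ for the remaining intra-cell downlink interference, and only then invoke your Step 2/3 arithmetic.
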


\begin{proof}
We prove this theorem using the IMAC-ODIA and IBC-ODIA schemes.
As noted earlier, the full-duplex cellular network with a relay can be decomposed into disjoint uplink and downlink networks sharing the relay as in Fig. \ref{fig_systemmodel_fd}.
Note that additional interference occurs due to full-duplexing, that is, UE to UE channels in the intra-cells or inter-cells now become interfering channels.

IBC-ODIA requires additional beamforming at the BS, in contrast to IMAC-ODIA.
Thus, it is natural that the FD-ODIA scheme should include beamforming at the BS. 
To describe the FD-ODIA scheme, we start with matrix augmentation with the same approach used earlier, where the relay only aligns the inter-cell interferences as 

\begin{align*}
\bar{\textbf{H}}^{BS}_j=\Big[&
\textbf{H}^{BB}_{j,1}\text{ } ...\text{ } \textbf{H}^{BB}_{j,j-1}\text{ } \textbf{H}^{BB}_{j,j+1}\text{ } ...\text{ } \textbf{H}^{BB}_{j,C} 
\\
&\textbf{H}^{UB}_{j,(1,1)}\text{ } ...\text{ } \textbf{H}^{UB}_{j,(K,j-1)}\text{ } \textbf{H}^{UB}_{j,(1,j+1)}\text{ } ...\text{ } \textbf{H}^{UB}_{j,(K,C)}
\Big]
\in \mathbb K^{N \times (C-1)(KM+N)}
,
\\ 
\\
\bar{\textbf{H}}^{UE}_{(k,j)}=\Big[&
\textbf{H}^{BU}_{(k,j),1}\text{ } ...\text{ } \textbf{H}^{BU}_{(k,j),j-1}\text{ } \textbf{H}^{BU}_{(k,j),j+1}\text{ } ...\text{ } \textbf{H}^{BU}_{(k,j),C} 
\\
&\textbf{H}^{UU}_{(k,j),(1,1)}\text{ } ...\text{ } \textbf{H}^{UU}_{(k,j),(k-1,j)}\text{ } \textbf{H}^{UU}_{(k,j),(k+1,j)}\text{ } ...\text{ } \textbf{H}^{UU}_{(k,j),(K,C)}
\Big]
\in \mathbb K^{M \times \{(CK-1)M+(C-1)N\}},
\\
\\
\bar{\textbf{H}}^{BR}_{j}=\Big[&
\textbf{H}^{BR}_{1}\text{ } ...\text{ } \textbf{H}^{BR}_{j-1}\text{ } \textbf{H}^{BR}_{j+1}\text{ } ...\text{ } \textbf{H}^{BR}_{C}
\\ 
&\textbf{H}^{UR}_{(1,1)}\text{ } ...\text{ } \textbf{H}^{UR}_{(K,j-1)}\text{ } \textbf{H}^{UR}_{(1,j+1)}\text{ } ...\text{ } \textbf{H}^{UR}_{(K,C)}
\Big]
\in \mathbb K^{N_R \times (C-1)(KM+N)},
\\
\\
\bar{\textbf{H}}^{UR}_{(k,j)}=\Big[&
\textbf{H}^{BR}_{1}\text{ } ...\text{ } \textbf{H}^{BR}_{j-1}\text{ } \textbf{H}^{BR}_{j+1}\text{ } ...\text{ } \textbf{H}^{BR}_{C}
\\ 
&\textbf{H}^{UR}_{(1,1)}\text{ } ...\text{ } \textbf{H}^{UR}_{(k-1,j)}\text{ } \textbf{H}^{UR}_{(k+1,j)}\text{ } ...\text{ } \textbf{H}^{UR}_{(K,C)}
\Big]
\in \mathbb K^{N_R \times \{(CK-1)M+(C-1)N\}}
.
\vspace{10pt}
\end{align*}
\\
In addition, the interference vectors should be redefined for the additional interferences as

\setcounter{MaxMatrixCols}{1}
\begin{equation*}
\bar{\textbf{x}}^{BS}_j=
\begin{bmatrix}
\textbf{x}^{BS}_{1}\\ \vdots\\ \textbf{x}^{BS}_{j-1}\\ \textbf{x}^{BS}_{j+1}\\ \vdots\\ \textbf{x}^{BS}_{C}\\
\textbf{x}^{UE}_{(1,1)}\\ \vdots\\ \textbf{x}^{UE}_{(K,j-1)}\\ \textbf{x}^{UE}_{(1,j+1)}\\ \vdots\\ \textbf{x}^{UE}_{(K,C)}
\end{bmatrix}\in \mathbb K^{(C-1)(KM+N) \times 1}
, \hspace{25pt}
\bar{\textbf{x}}^{UE}_{(k,j)}=
\begin{bmatrix}
\textbf{x}^{BS}_{1}\\ \vdots\\ \textbf{x}^{BS}_{j-1}\\ \textbf{x}^{BS}_{j+1}\\ \vdots\\ \textbf{x}^{BS}_{C}\\
\textbf{x}^{UE}_{(1,1)}\\ \vdots\\ \textbf{x}^{UE}_{(k-1,j)}\\ \textbf{x}^{UE}_{(k+1,j)}\\ \vdots\\ \textbf{x}^{UE}_{(K,C)}
\end{bmatrix}\in \mathbb K^{\{(CK-1)M+(C-1)N\} \times 1}
.
\vspace{10pt}
\end{equation*}
Note that $\textbf{x}^{BS}_{j}$ and $\textbf{x}^{UE}_{(k,j)}$ are the transmitted vectors from $\text{BS}_{j}$ and $\text{UE}_{(k,j)}$, respectively.
The definitions of the desired signal channel set $\{\widehat{\textbf{H}}\}$ and the desired signal vector set $\{\widehat{\textbf{x}}\}$ are exactly the same as those of the half-duplex uplink and downlink cellular network cases.

As noted above, the augmented interference matrices only contain the inter-cell interferences. 
Thus, additional BS beamforming will align the intra-cell interferences at the UEs, which will be described later.

The IA condition for FD-ODIA is similar to the IMAC/IBC-ODIA case except for the fact that it should be satisfied both on the BSs and the UEs as 
\vspace{10pt}
\begin{align}\label{equ_IA_FD}
\bar{\textbf{H}}^{BS}_j+\textbf{H}^{RB}_{j}\textbf{T}_{\text{FD}}\bar{\textbf{H}}^{BR}_{j}&=0,\text{ } j\in \{1, ..., C\},
\\
\label{equ_IA_FD2}
\bar{\textbf{H}}^{UE}_{(k,i)}+\textbf{H}^{RU}_{(k,i)}\textbf{T}_{\text{FD}}\bar{\textbf{H}}^{UR}_{(k,i)}&=0,\text{ } k\in \{1, ..., K\},\text{ } i\in \{1, ..., C\}
.
\end{align}
The existence of the relay beamformer $\textbf{T}_{\text{FD}}$ for the FD-ODIA scheme is also proved in a manner similar to that of the IMAC/IBC-ODIA case except for the number of relay antennas. 
In fact, in the full-duplex network case, the number of relay antennas $N_R$ should satisfy
\vspace{10pt}
\begin{equation*}
N_R\ge \max\{{(C-1)(KM+N),(C-1)N+(CK-1)M,C(KM+N)}\}=C(KM+N)
,
\vspace{10pt}
\end{equation*}
where $C(KM+N)$ is the total number of antennas in the cellular network.

Hence, $\textbf{T}_{\text{FD}}$ satisfying (\ref{equ_IA_FD}) and (\ref{equ_IA_FD2}) is derived as 

\setcounter{MaxMatrixCols}{1}
\begin{equation*} \label{equ_relay_beam_FD}
\textbf{T}_{\text{FD}}=\text{vec}_{N_R}^{-1}\left(
\begin{bmatrix}
(\bar{\textbf{H}}^{BR}_1)^T\otimes \textbf{H}^{RB}_{1,1}\\ \vdots\\ (\bar{\textbf{H}}^{BR}_C)^T\otimes \textbf{H}^{RB}_{C} \\(\bar{\textbf{H}}^{UR}_{(1,1)})^T\otimes \textbf{H}^{RU}_{(1,1)}\\ \vdots\\ (\bar{\textbf{H}}^{UR}_{(K,C)})^T\otimes \textbf{H}^{RU}_{(K,C)}
\end{bmatrix}
^{\text{right}}
\begin{bmatrix}
\text{vec}(-\bar{\textbf{H}}^{BS}_{1})\\ \vdots\\ \text{vec}(-\bar{\textbf{H}}^{BS}_{C})\\ \text{vec}(-\bar{\textbf{H}}^{UE}_{(1,1)})\\ \vdots\\ \text{vec}(-\bar{\textbf{H}}^{UE}_{(K,C)})
\end{bmatrix}
\right).
\vspace{10pt}
\end{equation*}
Note that the right null vector term in the argument of $\text{vec}_{N_R}^{-1}(\cdot)$ is omitted.

Furthermore, it is necessary to design the beamformer $\textbf{V}^{\text{FD}}_j$ at $\text{BS}_j$ in the same manner as IBC-ODIA, which should satisfy the following intra-cell IA condition 

\setcounter{MaxMatrixCols}{3}
\begin{equation*}\label{equ_FD_beam}
\begin{bmatrix}
\textbf{H}^{BU}_{(1,j),j}+\textbf{H}^{RU}_{(1,j)}\textbf{T}_{\text{FD}}\textbf{H}^{BR}_j\\ \vdots\\ \textbf{H}^{BU}_{(K,j),j}+\textbf{H}^{RU}_{(K,j)}\textbf{T}_{\text{FD}}\textbf{H}^{BR}_j
\end{bmatrix}
\begin{bmatrix}
\textbf{V}^{\text{FD}}_{(1,j)}&...&\textbf{V}^{\text{FD}}_{(K,j)}
\end{bmatrix}=\textbf{A}^{\text{FD}}_{j}\textbf{V}^{\text{FD}}_{j}=
\begin{bmatrix}
\textbf{I}_{M\times d}&\textbf{0}&\textbf{0} \\  \textbf{0}&\ddots&\textbf{0} \\ \textbf{0}&\textbf{0}&\textbf{I}_{M\times d}
\end{bmatrix}
.
\vspace{10pt}
\end{equation*}
Similar to the IBC-ODIA case, $\textbf{A}_{j}^{\text{FD}}$ is either a full row or a full column rank matrix with probability one depending on the number of antennas.
In this case, $\textbf{V}^{\text{FD}}_j$ can be derived as

\setcounter{MaxMatrixCols}{4}
\begin{equation*}\label{equ_FD_beam2}
\textbf{V}^{\text{FD}}_{j}=
\begin{bmatrix}
\{(\textbf{A}_{j}^{\text{FD}})^{\dagger}\}_{1:d}&\{(\textbf{A}_{j}^{\text{FD}})^{\dagger}\}_{M+1:M+d}&\cdots&\{(\textbf{A}_{j}^{\text{FD}})^{\dagger}\}_{(K-1)M+1:(K-1)M+d}
\end{bmatrix},
\vspace{10pt}
\end{equation*}
where $(\textbf{A}_{j}^{\text{FD}})^{\dagger}$ is either the right or left inverse matrix of $\textbf{A}_{j}^{\text{FD}}$.

Since the interferences are fully aligned and cancelled at each BS, the DoF of $\text{BS}_j$ for the full-duplex cellular network can be achieved as

\begin{equation*}
\text{DoF}^{\text{FD,BS}}_{j}={\min\{KM,N\}\over 2}
,
\vspace{10pt}
\end{equation*}
and the DoF of $\text{UE}_{(k,j)}$ can also be achieved as

\begin{equation*}
\text{DoF}^{\text{FD,UE}}_{(k,j)}={\min\{KM,N\}\over 2K}={\min\{M,{N\over K}\}\over 2}
.
\vspace{10pt}
\end{equation*}
Thus, we prove the theorem.
\end{proof}

Note that the total achievable DoF of the proposed FD-ODIA is doubled compared to the half-duplex case. 
It is also interesting that the proposed scheme requires only a half-duplexing AF relay to achieve full-duplexing gain in a full-duplex cellular network with FD-ODIA.
The design methodology of the proposed IMAC/IBC-ODIA schemes can directly be applied to the FD-ODIA scheme.

In the previous work \cite{fd16}, it was shown that causal MIMO full-duplex relay based on the DF protocol cannot increase the DoF but that non-causal and instantaneous full-duplex DF relaying can increase the DoF.
The result in \textit{Theorem \ref{dof_FD}} can be directly applied to the case of full-duplex instantaneous AF relay as in the following corollary.

\vspace{10pt}
\begin{corollary}[Achievable DoF with full-duplex instantaneous AF relay]\label{dof_FD_inst}
The achievable DoFs per cell, BS, and UE of the same full-duplex cellular network with full-duplex instantaneous AF relaying are given as
\begin{equation*}\label{equ_dof_FD_inst}
\text{DoF}_{\text{FD,cell}}^{\text{inst}}=\text{DoF}_{\text{FD,BS}}^{\text{inst}}+K\times \text{DoF}_{\text{FD,UE}}^{\text{inst}}=2\min\{KM,N\},
\vspace{10pt}
\end{equation*}
\begin{equation*}\label{equ_dof_FD_BS_inst}
\text{DoF}_{\text{FD,BS}}^{\text{inst}}=\min\{KM,N\},
\vspace{10pt}
\end{equation*}
\begin{equation*}\label{equ_dof_FD_UE_inst}
\text{DoF}_{\text{FD,UE}}^{\text{inst}}=\min\{M,{N\over K}\}
,
\vspace{10pt}
\end{equation*}
\end{corollary}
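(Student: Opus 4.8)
The plan is to build directly on \textit{Theorem \ref{dof_FD}}, whose proof already establishes every ingredient except the elimination of the factor of two that arises from half-duplex operation at the relay. The key observation is that the only reason the half-duplex FD-ODIA scheme incurs a $1/2$ penalty is that the relay must separate reception (first time slot) and transmission (second time slot) across two channel uses, with the ODIA cancellation realized by summing the signals received over these two slots. With a full-duplex instantaneous AF relay, the relay receives and retransmits within the same channel use, so the direct-path signal and the relay-forwarded signal superpose at each destination in a single time slot.

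First I would rewrite the received signals at $\text{BS}_j$ and at each $\text{UE}_{(k,j)}$ for the instantaneous-relay model, merging the direct and relayed contributions into a single equation rather than two. Because the instantaneous relay applies the very same linear beamformer $\textbf{T}_{\text{FD}}$ as in the half-duplex case, the resulting inter-cell interference term reduces to $(\bar{\textbf{H}}^{BS}_j+\textbf{H}^{RB}_j\textbf{T}_{\text{FD}}\bar{\textbf{H}}^{BR}_j)\bar{\textbf{x}}^{BS}_j$ at the BS and to $(\bar{\textbf{H}}^{UE}_{(k,j)}+\textbf{H}^{RU}_{(k,j)}\textbf{T}_{\text{FD}}\bar{\textbf{H}}^{UR}_{(k,j)})\bar{\textbf{x}}^{UE}_{(k,j)}$ at the UE. Thus the inter-cell IA conditions (\ref{equ_IA_FD}) and (\ref{equ_IA_FD2}) are unchanged, so the same existence argument (requiring $N_R\ge C(KM+N)$) and the same closed-form $\textbf{T}_{\text{FD}}$ carry over verbatim, as does the intra-cell BS beamformer $\textbf{V}^{\text{FD}}_j$.

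Next I would note that the effective desired channel and the post-cancellation ranks are identical to those in \textit{Theorem \ref{dof_FD}}: each BS still achieves effective rank $\min\{KM,N\}$ and each UE still achieves $\min\{M,N/K\}$. The only change is that these ranks are now delivered in one channel use rather than two, so there is no longer a division by two. Consequently every DoF expression doubles, yielding $\text{DoF}_{\text{FD,BS}}^{\text{inst}}=\min\{KM,N\}$, $\text{DoF}_{\text{FD,UE}}^{\text{inst}}=\min\{M,N/K\}$, and the per-cell sum $\text{DoF}_{\text{FD,cell}}^{\text{inst}}=2\min\{KM,N\}$.

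The main obstacle I anticipate is justifying the instantaneous-relay model itself, namely that a non-causal instantaneous AF relay may superpose its output onto the direct path within the same symbol interval without violating causality. I would address this by invoking the instantaneous/non-causal full-duplex relaying framework of \cite{fd16} exactly as that reference does for the DF case, so that the DoF accounting reduces cleanly to removing the half-duplex time-sharing factor and no new feasibility condition beyond $N_R\ge C(KM+N)$ is introduced.
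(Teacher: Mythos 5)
Your proposal is correct and matches the paper's reasoning: the paper gives no separate proof for this corollary, asserting only that the result of \textit{Theorem \ref{dof_FD}} "can be directly applied" with the same relay antenna requirement, which is precisely your argument that the IA conditions (\ref{equ_IA_FD})--(\ref{equ_IA_FD2}), the beamformers $\textbf{T}_{\text{FD}}$ and $\textbf{V}^{\text{FD}}_j$, and the effective-channel rank computations carry over verbatim while the two-slot penalty disappears under instantaneous relaying as in \cite{fd16}. Your writeup in fact supplies more detail than the paper itself does.
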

where the relay antenna requirement is the same as that in \textit{Theorem \ref{dof_FD}}. 

\begin{remark}[Half-duplex UE with full-duplex BS]
Due to the hardware complexity limitation, an UE cannot be operated in full-duplex mode, and thus half-duplex UEs served by full-duplex BSs can be considered.
In this case, the UEs can be partitioned into either an uplink group or a downlink group. 
Assume that $K_1$ UEs are in the uplink group and $K_2$ UEs are in the downlink group with $M$ antennas at each UE for both groups.
Note that the BSs operate in the full-duplex mode with $N$ antennas.
Then the DoF of the above system can be given as the following corollary.
\end{remark}

\begin{corollary}[DoF of a cellular network with half-duplex UEs served by a full-duplex BS]\label{coro_2}
The achievable DoF per cell with the half-duplex UEs served by the full-duplex BS is given as

\begin{equation}
\text{DoF}_{\text{HD+FD,cell}}=\text{DoF}_{\text{HD+FD,BS}}+K_2\times \text{DoF}_{\text{HD+FD,UE}} \label{equ_HD+FD},
\vspace{10pt}
\end{equation}
where 
$\text{DoF}_{\text{HD+FD,BS}}={\min\{{K_1}M,N\}\over 2}$ and
$\text{DoF}_{\text{HD+FD,UE}}={\min\{M,{N\over {K_2}}\}\over 2}$ .
\vspace{10pt}
\end{corollary}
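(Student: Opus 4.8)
The plan is to specialize the FD-ODIA construction of \textit{Theorem \ref{dof_FD}} to the situation in which every UE is restricted to a single direction, and then to read off the per-cell DoF as the sum of an uplink IMAC-ODIA contribution and a downlink IBC-ODIA contribution. First I would invoke the full-duplex network decomposition of Fig. \ref{fig_network_decomposition}: because each $\text{BS}_j$ operates in full-duplex mode, it simultaneously acts as the receiver of an uplink subnetwork formed by the $K_1$ uplink UEs and as the transmitter of a downlink subnetwork feeding the $K_2$ downlink UEs, with both subnetworks sharing the $N$-antenna BSs and the single half-duplex relay over the same two time slots. The half-duplex restriction on the UEs simply deletes all channels in which a single UE both transmits and receives, so the interference set here is a strict subset of the one handled in \textit{Theorem \ref{dof_FD}}.

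Next I would set up the interference bookkeeping. The uplink UEs create inter-cell interference at every BS and, since the downlink UEs are listening, UE-to-UE interference at every downlink UE; the full-duplex BSs additionally create BS-to-BS interference at foreign BSs. All of these channels already appear as submatrices of $\bar{\textbf{H}}^{BS}_j$, $\bar{\textbf{H}}^{UE}_{(k,j)}$, $\bar{\textbf{H}}^{BR}_j$, and $\bar{\textbf{H}}^{UR}_{(k,j)}$ from the FD-ODIA scheme, so I would reuse the relay beamformer $\textbf{T}_{\text{FD}}$ and the per-BS beamformer $\textbf{V}^{\text{FD}}_j$ constructed there, restricted to the surviving channels. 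Concretely, $\textbf{T}_{\text{FD}}$ aligns the inter-cell interferences (together with the surviving UE-to-UE terms) at both the BSs and the downlink UEs via the Khatri--Rao and generalized-Kruskal-rank argument of \textit{Theorem \ref{the_exist_beamformer}}, while $\textbf{V}^{\text{FD}}_j$ clears the residual intra-cell BS-to-UE interference.

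With all interference aligned and cancelled, the DoF count factors cleanly. For the uplink half I would apply \textit{Theorem \ref{the_dof_cell}} with the parameter set $(C,K_1,M,N)$, giving each BS an interference-free effective rank $\min\{K_1 M, N\}$ over two time slots, i.e. $\text{DoF}_{\text{HD+FD,BS}}=\min\{K_1 M,N\}/2$. For the downlink half I would apply \textit{Theorem \ref{dof_DL}} with $(C,K_2,M,N)$, giving each downlink UE $\text{DoF}_{\text{HD+FD,UE}}=\min\{M,N/K_2\}/2$ and hence a downlink per-cell total of $K_2\min\{M,N/K_2\}/2=\min\{K_2 M,N\}/2$. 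Because the full-duplex BSs carry both halves over the same two slots rather than time-sharing them, the per-cell DoF is the \emph{sum} of the two contributions, which is exactly (\ref{equ_HD+FD}).

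The main obstacle is confirming that the single half-duplex relay can still \emph{simultaneously} satisfy the uplink and downlink alignment conditions once the UE directions are fixed, i.e. that the relay beamformer of \textit{Theorem \ref{dof_FD}} continues to exist in this restricted setting. I would handle this by observing that the stacked interference matrix for the HD+FD case is obtained from the full FD-ODIA matrix by deleting the rows and columns corresponding to the forbidden same-UE transmit-and-receive channels; its generalized Kruskal rank is therefore no smaller than in the full case, and the antenna budget required for solvability is bounded above by the $C(KM+N)$ of \textit{Theorem \ref{dof_FD}} with $K=K_1+K_2$. Thus the existence of $\textbf{T}_{\text{FD}}$, and with it the whole DoF claim, follows a fortiori from the already-established full-duplex result.
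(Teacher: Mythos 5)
Your proposal takes essentially the same route as the paper: the paper also obtains this corollary by running FD-ODIA with the channels that vanish under half-duplex UEs excluded from the relay beamformer design, so that the uplink side contributes $\min\{K_1M,N\}/2$ per BS and the downlink side contributes $K_2 \times \min\{M,N/K_2\}/2$ per cell over the same two time slots. Your added a fortiori solvability check (the HD+FD alignment system is a row-subsystem of the full FD-ODIA system, hence solvable whenever the latter is, under the $N_R \ge C(KM+N)$ budget with $K=K_1+K_2$) is a correct elaboration of what the paper leaves implicit.
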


The DoF in (\ref{equ_HD+FD}) can be achieved with the IA scheme similar to the previous FD-ODIA scheme.
It can be achieved by FD-ODIA with the exclusion of non-existing BS-to-UE channels, UE-to-BS channels, and UE-to-UE channels in the relay beamformer design, which vanishes automatically due to the half-duplexing of UEs.

Thus, it should be noted that the proposed schemes are very flexible, that is, IA schemes for various cellular networks can be implemented with the proposed schemes by modifying the channel assignment protocol in the relay.

\section{Discussion} \label{sec_discussion}

\subsection{DoF Improvement without Time Extension}

With the proposed schemes, the total achievable DoF of a half-duplex uplink cellular network $\text{DoF}_{\text{ODIA,uplink}}$ is given as $C\min\{N, KM\}\over 2$. 
For the $(C,K,M,N)$ uplink cellular networks, $\text{DoF}_{\text{ODIA,uplink}}$ of the proposed scheme with a relay and without UE beamforming is much larger than $\text{DoF}_{\text{linear,uplink}}$ of the linear scheme without a relay and with UE beamforming, which can be analyzed using the outer bound of DoF of the cellular network without time extension.

\begin{figure}
\centering
\includegraphics[scale=0.55]{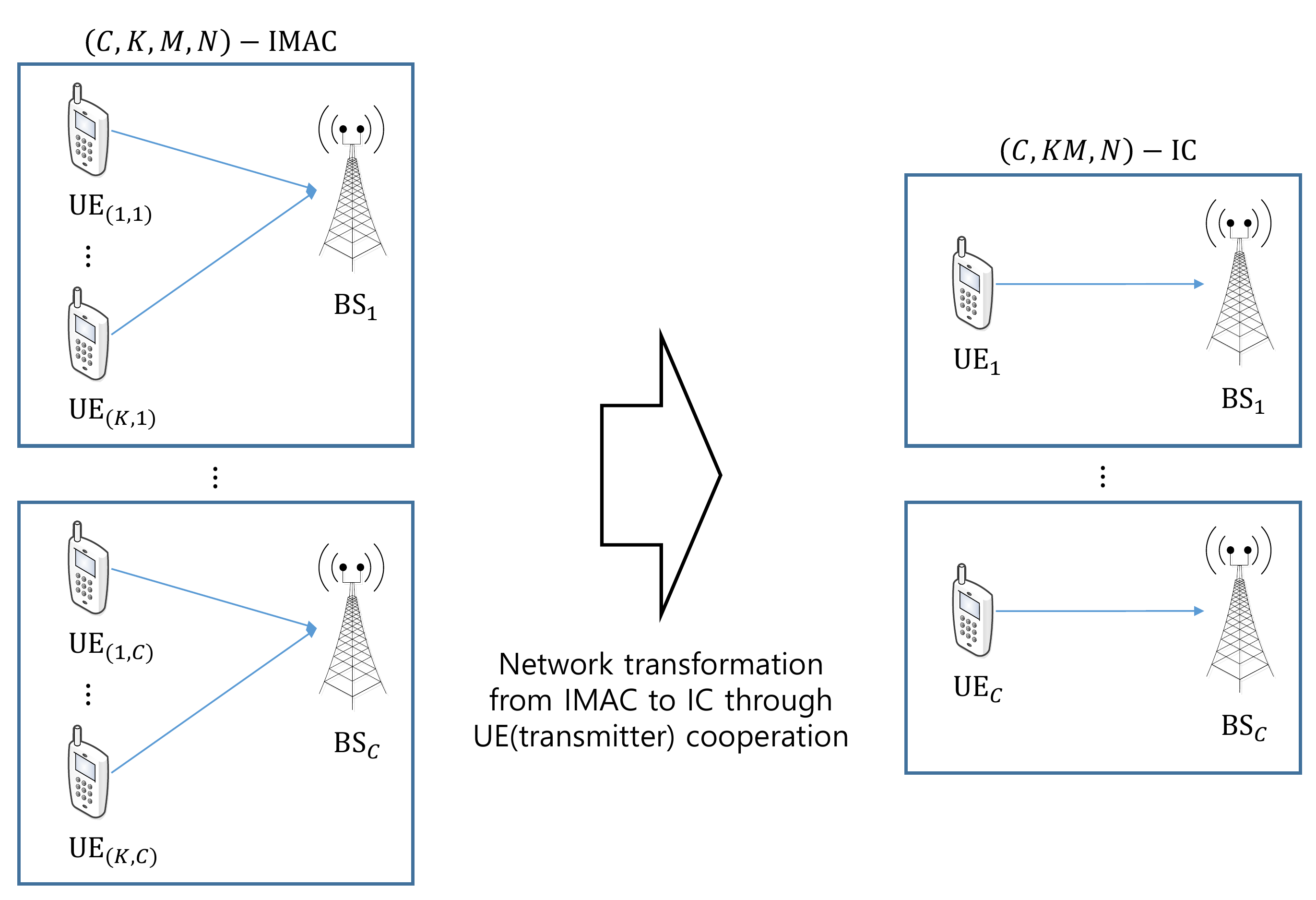}
\caption{IMAC-to-IC transformation with $\text{UE}_{c},\text{ } 1\le c\le C$, which is a super-UE with $KM$ antennas.}
\label{fig_network_transformation}
\end{figure}

An $(C,K,M,N)$ IMAC can be transformed into an $(C,KM,N)$ IC with transmitter cooperation, that is, all UEs at each BS transmit their data jointly as a single super-UE as shown in Fig. \ref{fig_network_transformation}.
Let $\text{DoF}_{\text{Linear,coop}}$ denote the total DoF of the $(C,KM,N)$ IC.
Then, the following inequality is easily given as
\begin{equation*}\label{inequ_dof_transform}
\text{DoF}_{\text{linear,uplink}} \le \text{DoF}_{\text{linear,coop}}
.
\vspace{10pt}
\end{equation*}
%
%
Further, using the result in \cite{linear12}, $\text{DoF}_{\text{linear,coop}}$ is bounded as

\begin{equation}\label{inequ_dof_transform_coop}
\text{DoF}_{\text{linear,uplink}} \le \text{DoF}_{\text{linear,coop}} \le KM+N
.
\vspace{10pt}
\end{equation}
However, the total DoF $\text{DoF}_{\text{ODIA,uplink}}$ of the proposed scheme increases proportional to the number of cells, which cannot be achieved in (\ref{inequ_dof_transform_coop}).
From \cite{relay06}, it is easy to find that $\text{DoF}_{\text{ODIA,uplink}}={C\min\{N, KM\}\over 2}$ is indeed the maximally achievable DoF in an uplink cellular network with a half-duplex relay.

Note that the information-theoretic upper bound of the $(C,K,M,N)$ IMAC ($\approx {KMN\over {KM+N}}$, see \cite{cellular15}) is approximately two times larger than the achievable DoF of the proposed scheme in an extreme case (i.e., $N\ll KM$ or $N\gg KM$), 
whereas the upper bound cannot be achieved by the linear IA scheme without time extension.

Additionally, in an antenna regime where BSs and UEs in a cell have similar numbers of antennas (i.e., $N\approx KM$), the proposed IMAC/IBC-ODIA schemes achieve the optimal DoF of MIMO IMAC/IBC, which is believed to be achieved only through asymptotic IA or IA aided by a full-duplex relay.

\subsection{Possibility of Gain from Additional Relay Antennas}

Suppose that a relay antenna setup is sufficient to serve an entire network, that is, there are more antennas than required. 
Then the question is, ``Is there any extra gain from additional relay antennas?"  
We answer this question in this subsection.

Consider that an uplink cellular network with the IA-applied data stream through decorrelator at the BS is expressed as (\ref{equ_deco}) in the previous section.
Our goal is to design a relay beamformer satisfying the following additional statement as

\begin{equation}\label{equ_boost}
\widehat{\textbf{H}}_{\text{eff},j}=\widehat{\textbf{H}}_j+\textbf{H}^{RB}_{j}\textbf{T}_{+}\widehat{\textbf{H}}^{UR}_j=(1+\alpha)\widehat{\textbf{H}}_j
\hspace{10pt}
\Rightarrow
\hspace{10pt}
\textbf{H}^{RB}_{j}\textbf{T}_{+}\widehat{\textbf{H}}^{UR}_j=\alpha\widehat{\textbf{H}}_j,
\vspace{10pt}
\end{equation}
where $\alpha \ge 0$ and $\textbf{T}_{+}$ denotes the relay beamformer with additional gain.
If (\ref{equ_boost}) is satisfied, it can be interpreted as boosting the received signal power level by $\alpha$, while the DoF performance remains the same. 

We skip the details because the proof is identical to those in Theorems 1 and 2. 
The existence of $\textbf{T}_{+}$ satisfying  (\ref{equ_boost}) can be guaranteed when $N_R\ge \max\{{CKM,CN}\}$. 
Hence, the design of $\textbf{T}_{+}$ is given as
\setcounter{MaxMatrixCols}{1}
\begin{equation*} \label{equ_relay_beam_add}
\textbf{T}_{+}=\text{vec}_{N_R}^{-1}\left(
\begin{bmatrix}
({\textbf{H}}^{UR}_1)^T\otimes \textbf{H}^{RB}_{1}\\ ({\textbf{H}}^{UR}_2)^T\otimes \textbf{H}^{RB}_{2}\\ \vdots\\ ({\textbf{H}}^{UR}_C)^T\otimes \textbf{H}^{RB}_{C}
\end{bmatrix}
^{\text{right}}
\begin{bmatrix}
\text{vec}({\textbf{H}}_1)\\ \text{vec}({\textbf{H}}_2)\\ \vdots\\ \text{vec}({\textbf{H}}_C)
\end{bmatrix}
\right),
\vspace{10pt}
\end{equation*}
where for simplicity, the right null vector term in the argument of $\text{vec}_{N_R}^{-1}(\cdot)$ is omitted and ${\textbf{H}}_j\in \mathbb K^{N \times CKM}$ and ${\textbf{H}}^{UR}_j\in \mathbb K^{N_R \times CKM}$ are correspondingly defined as

\setcounter{MaxMatrixCols}{9}
\begin{align}
{\textbf{H}}^{UR}_j&=
\begin{bmatrix}
\textbf{H}^{UR}_{1, 1}&\textbf{H}^{UR}_{2, 1}&\cdots&\textbf{H}^{UR}_{K, C},
\end{bmatrix}, \nonumber
\\
\nonumber
\\
{\textbf{H}}_j&=
\begin{bmatrix}
-\textbf{H}_{j,(1,1)}&\cdots&-\textbf{H}_{j,(K,j-1)}&\alpha\textbf{H}_{j,(1,j)}&\cdots&\alpha\textbf{H}_{j,(K,j)}&-\textbf{H}_{1,(1,j+1)}&\cdots&-\textbf{H}_{j,(K,C)}
\end{bmatrix}
. \nonumber
\end{align}
\\
Then, the relay beamformer $\textbf{T}_{+}$ for the proposed IMAC-ODIA can align the interferences, while also boosting the desired signal power level to $KM$ additional antennas at the relay.
Thus, from the additional relay antenna, we can achieve better throughput.

\vspace{10pt}

\section{Conclusions} \label{sec_conclusion}

In this paper, we proposed IA schemes for the cellular networks which operate with a half-duplex relay without UE beamforming or CSI handling, representing the fundamental limit when applying IA to cellular network.
We proposed an uplink cellular network with IMAC-ODIA and derived its achievable DoF.
Using linear beamforming at the relay and uplink-downlink duality, we extended the proposed scheme with the BS beamformer design to a downlink cellular network without UE beamforming or CSI handling.
Further, the proposed schemes can also be extended to a full-duplex cellular network, achieving doubled DoF compared to a half-duplex cellular network.
Although it has DoF gap from the information-theoretic upper bound, we proved that the proposed schemes have network gain proportional to the network size, which cannot be achieved in a cellular network with linear IA. 
Moreover, for some antenna regime, the proposed schemes are able to achieve the optimal DoF.

\vspace{10pt}
\section*{Acknowledgement}
This work was supported by 

\vspace{10pt}

%








\end{document}